\documentclass[aps,prl,reprint,superscriptaddress,nofootinbib]{revtex4-2}

\usepackage{amsmath,amsthm,amssymb,mathtools,bm}
\usepackage{color}

\usepackage{tikz}
\usepackage{hyperref}
\usetikzlibrary{decorations.pathreplacing}

\newcommand{\cF}{\mathcal F}
\newcommand{\cD}{\mathcal D}
\newcommand{\cH}{\mathcal H}
\newcommand{\cC}{\mathcal C}
\newcommand{\Tr}{\operatorname{Tr}}
\newcommand{\Fix}{\operatorname{Fix}}

\newcommand{\FSEP}{\mathcal F_{\mathrm{SEP}}}

\newcommand{\ket}[1]{|#1\rangle}
\newcommand{\bra}[1]{\langle #1|}

\theoremstyle{definition}
\newtheorem{assumption}{Assumption}
\newtheorem{capability}{Capability}
\newtheorem{proposition}{Proposition}
\newtheorem{lemma}{Lemma}
\newtheorem{theorem}{Theorem}
\newtheorem{corollary}{Corollary}
\newtheorem{remark}{Remark}

\begin{document}

\title{Inverse Problem of Alchemical Resource Theory: Replication and Universal Simulation Single Out Imaginarity and Parity Asymmetry}


\author{Yasuaki Nakayama}
\email{yasuaki.nakayama@ntt.com}
\affiliation{NTT Communication Science Laboratories, NTT, Inc., 3-1 Morinosato Wakamiya, Atsugi, Kanagawa 243-0198, Japan}

\author{Hayato Arai}
\email{h.arai6626@gmail.com, haya.arai@ntt.com}
\affiliation{NTT Communication Science Laboratories, NTT, Inc., 3-1 Morinosato Wakamiya, Atsugi, Kanagawa 243-0198, Japan}

\begin{abstract}
Quantum resource theories usually begin with a prescribed free structure and ask what tasks become possible when a resource is supplied. We study the inverse problem of alchemical resource theories, which we define as resource theories admitting a resource that can both replicate itself exactly and universally simulate quantum instruments. Under natural consistency assumptions and branchwise complete freeness, we show that for multi-qubit systems only two nontrivial theories survive, up to a common local unitary change of basis: parity asymmetry and imaginarity. We further show that exact universality exhibits an all-or-nothing trade-off: any nonmaximal resource state can exactly simulate only free unitaries. These results demonstrate that prescribed operational capabilities can strongly constrain the underlying resource structure and reveal a fundamental trade-off between computational capability and the precision required for physical implementation.
\end{abstract}

\maketitle


Quantum resource theories provide a general framework for resources in quantum information processing~\cite{ChitambarGour2019,BrandaoGour2015,Gour2017,Regula2018}, encompassing prominent examples such as entanglement~\cite{HorodeckiEtAl2009}, coherence~\cite{BaumgratzEtAl2014}, asymmetry~\cite{GourSpekkens2008}, athermality~\cite{BrandaoEtAl2013}, and magic~\cite{VeitchEtAl2014}.
Conventionally, one first specifies the free states and operations and then asks what can be achieved by supplying a resource.

However, from the viewpoint of designing quantum information-processing devices, the opposite direction is equally natural: the desired functionality may be known before the resource structure that realizes it.
Related inverse viewpoints have appeared in several forms, including reconstruction problems concerning different characterizations of resource theories~\cite{ScandiSurace2021}, deriving resource structures or operation classes from operational or inferential principles~\cite{NagasawaEtAl2025,LieEtAl2026}, and constraining general resource theories through prescribed operational requirements~\cite{SonEtAl2026}.
In particular, identifying further settings in which resource replication can arise has been explicitly left as a direction for future work~\cite{KuroiwaYamasaki2020}.
Here we pursue this viewpoint at the level of the resource theory itself, treating the resource structure as the unknown to be determined from prescribed operational capabilities (Fig.~\ref{fig:inverse-problem}).

\begin{figure}[t]
\centering
\vspace*{0.12cm}
\makebox[\columnwidth][c]{
\hspace*{0cm}
\begin{tikzpicture}[x=0.4cm,y=0.4cm,line width=0.55pt]
    \draw[rounded corners=6pt] (0,0) rectangle (7,5);
    \draw[rounded corners=6pt] (14,0) rectangle (21,5);
    \draw[->](8.7,2.4)--(12.3,2.4);
    \draw[line width=1pt,->](17.5,-0.4) .. controls (13.5,-1.9) and (7.5,-1.9) .. (3.5,-0.4);
    \draw (3.5,4.3) node {\textbf{Structure of}};
    \draw (3.5,3.5) node {\textbf{resource theory}};
    \draw (3.5,2.1) node {\footnotesize{free states}};
    \draw (3.5,1.1) node {\footnotesize{free operations}};
    \draw (10.5,3.2) node {\scriptsize{Standard direction}};
    \draw (17.5,4.3) node {\textbf{Operational}};
    \draw (17.5,3.5) node {\textbf{capabilities}};
    \draw (17.5,2.2) node {\footnotesize{resource replication}};
    \draw (17.5,1.3) node {\footnotesize{universal instrument}};
    \draw (17.5,0.62) node {\footnotesize{simulation}};
    \draw (10.5,-2.2) node {\textbf{Inverse problem}};
\end{tikzpicture}
}
\vspace{-0.35cm}
\caption{Conventional and inverse approaches to quantum resource theories. Conventional analysis starts from a prescribed resource structure and derives its operational capabilities, whereas the inverse problem asks which resource structures are compatible with prescribed operational capabilities.}
\label{fig:inverse-problem}
\end{figure}

In this inverse setting, the classes of free states and free operations are themselves to be determined from the prescribed capabilities, which constrain the admissible resource structures.
This reverses the usual direction of resource-theoretic analysis, where the free structure is specified first and its operational consequences are studied afterward.
The problem is subtler for composite systems: local notions of freeness do not generally determine which joint states are free, as seen, for example, in resource theories of asymmetry and quantum reference frames~\cite{BartlettRudolphSpekkens2007,GourSpekkens2008,MiyazakiAkibue2024,WuEtAl2024Distributed} and in recent axiomatic studies of composite resource theories~\cite{GanardiEtAl2026}.
The inverse problem must therefore determine a consistent resource structure across both individual and composite systems.

For a resource to serve as a useful ingredient for scalable quantum information processing, it should be both repeatedly available and usable for general information-processing tasks.
We therefore impose two strong capabilities on a resource state \(g\): exact self-replication,
\(g\mapsto g^{\otimes 2}\), providing scalable resource supply, and exact simulation of arbitrary instruments on finite composite systems. As noted below, together with replication and CFK compositionality, it is enough to impose the simulation requirement only on instruments whose input and output each contain at most two single systems.
We call a resource state possessing both capabilities \emph{alchemical}, and a resource theory admitting such a resource an \emph{alchemical resource theory}.
Their combination represents a maximal exact operational capability and, in particular, enables universal quantum computation~\cite{BravyiKitaev2005}.
Closely related capabilities are known in both parity asymmetry and imaginarity~\cite{GourSpekkens2008,MarvianSpekkens2012,HickeyGour2018,KuroiwaYamasaki2020,WuEtAl2021,WuEtAl2021PRA,Takeuchi2024,NakayamaTakeuchiAkibue2026}.
However, whether these idealized resource structures can be naturally realized and exploited on practical physical platforms remains nontrivial.
We therefore ask whether other resource structures can provide the same capabilities while being more amenable to physical realization; any such solution would be highly valuable for quantum information processing.

Our main result shows that, among convex resource theories, alchemical resource theories are highly restricted.
For a non-trivial local free-state space, replication and universal simulation single out only two qubit resource structures, up to unitary equivalence: parity asymmetry and imaginarity.
While restrictive from the original design perspective, this classification reveals a more fundamental structure: without assuming any symmetry principle, the operational capabilities alone enforce a Wigner-type unitary--antiunitary dichotomy in the resource structure~\cite{Wigner1931,Bargmann1964}.
The derivation further uncovers another theoretically important feature: a hidden nonlocality in the composite free structure.
Self-replication already requires free composite states beyond mixtures of locally free product states, and the full alchemical requirements strengthen this further, forcing entangled states themselves to be free.

The classification also yields two consequences for physical implementation.
First, resource-nongenerating (RNG) operations are widely used as the largest class preserving free states, although they can be too permissive as physically free operations~\cite{ChitambarGour2019,ChitambarGour2016,LiuHuLloyd2017,AlbarelliEtAl2018,SaxenaEtAl2020,TajimaTakagi2025}.
Our setting provides a new example: RNG operations allow artificial self-replication by hiding branchwise resource generation through averaging.
Second, both admissible resource structures exhibit an all-or-nothing behavior in exact simulation: a maximally resourceful state can exactly simulate arbitrary qubit unitaries, whereas any nonmaximal resource can exactly simulate only free unitaries.
Hence maximal exact capability requires exact maximality of the resource state.
Together, these results reveal a new trade-off between operational capability and physical implementability, with strong exact capabilities requiring stringent conditions on both free implementations and resource-state preparation.

We first discuss how to formulate the inverse problem in an operationally meaningful way and introduce minimal natural assumptions.
The proof shows that replication and universal simulation force every local free state to be unbiased with respect to the resource state.
On the composite side, exact replication cannot be supported by mixtures of locally free product states alone, forcing genuinely global free correlations.
For qubits, the resulting local free-state geometry admits only two nontrivial possibilities, which propagate to all finite composite systems and yield the unitary and antiunitary involution theories.
The remainder of the main text develops its consequences for physical implementation.
The End Matter provides additional constructions and the proof of the replication-only correlation result, while the Supplemental Material provides the complete proof of the finite-qubit classification.

\paragraph{Inverse problem setting.---}
First of all, we specify the general setting of resource theories to be explored in our inverse problem. Now, we denote $\cD(\cH)$ as the set of all density matrices on a Hilbert space $\cH$.
For a single system, let \(\cF_1\subseteq\cD(\cH_1)\) be a nonempty closed convex set of free states,
and define its unnormalized free cone
\begin{equation}
 \cC_1:=\{t\sigma:t\geq0,\ \sigma\in\cF_1\}.
 \label{eq:cone}
\end{equation}
Here, we assume that all single systems have the same structure. When subsystem labels are explicit, we use the corresponding notation \(\cF_A,\cF_B,\cC_A,\cC_B,\ldots\).
For a composite system $AB$, we define $\cF_{AB}$ and $\cC_{AB}$ on the Hilbert space $\cH_A\otimes\cH_B$ in the same manner.
We further impose the following consistency conditions.

(C1) Free states are closed under independent composition:
\begin{equation}
    \sigma_A\in\cF_A,\ \tau_B\in\cF_B
    \Longrightarrow
    \sigma_A\otimes\tau_B\in\cF_{AB}.
    \label{eq:product-closure}
\end{equation}
This requires that independently combining free systems does not generate a resource.

(C2) Free states are closed under taking marginals:
\begin{equation}
    \rho_{AB}\in\cF_{AB}
    \Longrightarrow
    \Tr_B[\rho_{AB}]\in\cF_A,\ 
    \Tr_A[\rho_{AB}]\in\cF_B.
    \label{eq:marginal-consistency}
\end{equation}
Thus, simply discarding part of a free system cannot generate a resource.

(C3) Freeness is invariant under permutations of subsystems. For a permutation $\pi$ of the subsystems, with $U_\pi$ the corresponding permutation unitary, we assume
\begin{equation}
U_\pi\cF_{A_1\cdots A_n}U_\pi^\dagger
=
\cF_{A_{\pi(1)}\cdots A_{\pi(n)}}.
\label{eq}
\end{equation}
This ensures that freeness does not depend on the labeling or ordering of subsystems.

We next define the free operations considered in this work.
We do not adopt resource-nongenerating (RNG) operations as free operations, since they can admit operationally artificial realizations of the prescribed capabilities.
In particular, as shown explicitly in the End Matter, even a simple measure-and-prepare RNG channel can replicate a resource state.
This provides a new operational argument, complementary to previous observations that maximal resource-nongenerating operations can be too broad to represent physically meaningful free operations~\cite{ChitambarGour2019,ChitambarGour2016,LiuHuLloyd2017,AlbarelliEtAl2018,SaxenaEtAl2020,TajimaTakagi2025}.

Instead, we require freeness at the level of individual Kraus branches under arbitrary extensions. A Kraus operator $K:\cH_A\to\cH_B$ is called a \emph{completely free Kraus operator} if, for every reference system $R$,
\begin{equation}
(K\otimes I_R)\cC_{AR}(K^\dagger\otimes I_R)
\subseteq \cC_{BR},
\label{eq:cfk}
\end{equation}
where $I_R$ is the identity operator on $\cH_R$.
A completely positive map $\Phi:A\to B$ is called a \emph{Completely Free Kraus (CFK) operation} if
$\Phi(X)=\widetilde\Phi(X\otimes\eta_E)$ for some finite ancillary system $E$, free state $\eta_E\in\cF_E$, and completely positive map $\widetilde\Phi:AE\to B$ admitting a Kraus representation consisting entirely of completely free Kraus operators.
An instrument $\{\Phi_j:A\to B\}_j$ is called CFK if its outcome maps admit such realizations with a common finite free ancilla $\eta_E$, and $\sum_j\Phi_j$ is trace preserving. Throughout this work, we take the class of CFK operations defined above as the free operations.

Operationally, the CFK condition excludes resource generation even when post-selection over instrument outcomes is allowed and arbitrary free ancillary systems are available, including those correlated with untouched reference systems.
The CFK class is closed under sequential and tensor-product composition, outcome-conditioned composition, and arbitrary scalar multiplication of Kraus operators; proofs are given in the Supplemental Material.

For the reader's convenience, we summarize our setting.
\begin{assumption}[Inverse problem setting]
\label{assumptions}
Throughout this work, we consider resource theories satisfying the following conditions:
\begin{enumerate}
    \item[(A1)] The free-state set of each system is nonempty, closed, and convex, and all single systems have the same structure.

    \item[(A2)] The free-state structure satisfies the consistency conditions (C1)--(C3).

    \item[(A3)] The class of free operations is given by the CFK operations.
\end{enumerate}
\end{assumption}
These assumptions specify the class of resource theories over which we formulate our inverse problem. 
Next, we introduce the operational capabilities that define the problem.

Let $g=\ket{\gamma}\!\bra{\gamma}\in\cD(\cH_1)\setminus\cF_1$ be a pure resource state of a single system $P$. We impose the following two operational capabilities on $g$.

\begin{capability}[Resource replication]
\label{cap:replication}
There exists a CFK channel $\mathcal R$ such that
\begin{equation}
    \mathcal R(g)=g^{\otimes 2}.
    \label{eq:replication}
\end{equation}
\end{capability}

This capability provides a scalable supply of the resource.

\begin{capability}[Universal instrument simulation]
\label{cap:universal-simulation}
For every instrument $\{\mathcal N_j:A\to B\}_j$ between arbitrary finite composite systems \(A\) and \(B\), there exist a finite integer $k$ and a CFK instrument
$\{\Phi_j:A\otimes P^{\otimes k}\to B\}_j$ such that
\begin{equation}
    \Phi_j(\rho\otimes g^{\otimes k})
    =
    \mathcal N_j(\rho)
    \label{eq:universal-simulation}
\end{equation}
for every input state $\rho$ and every outcome $j$.
\end{capability}


We call a resource state \(g\) satisfying both capabilities \emph{alchemical}, and a resource theory admitting such a state an \emph{alchemical resource theory}.
An alchemical resource thus has maximal exact operational capability.

\begin{remark}
\label{rem}
Although Capability~\ref{cap:universal-simulation} is stated for arbitrary finite composite systems, in the presence of replication it is enough to require exact simulation only for instruments whose input and output each contain at most two single systems. Indeed, any finite-system instrument admits an exact finite Stinespring realization using state preparation, one- and two-system unitaries, single-system measurements and discarding, and classical coarse-graining. CFK operations are closed under the required tensor and sequential compositions, while replication supplies the finitely many copies of \(g\) used by the elementary simulators. Thus this two-system condition already implies Capability~\ref{cap:universal-simulation}.
\end{remark}

The two capabilities are illustrated in Fig.~\ref{fig:capabilities}, with the unitary case shown for simulation.

\begin{figure}[t]
\centering
\makebox[\columnwidth][c]{
\hspace*{0.33cm}
\begin{tikzpicture}[x=0.6cm,y=0.6cm,line width=0.55pt]
    \draw (1,0) rectangle (3,2);
    \draw (5,0) rectangle (8.5,5);
    \draw(0,1)--(1,1);
    \draw(3,1)--(5,1);
    \draw(3.7,0.7)--(4.3,1.3);
    \draw(8.5,1)--(10,1);
    \draw(0,4)--(5,4);
    \draw(8.5,4)--(10,4);
    \draw (-0.45,1) node {$g$};
    \draw (-0.45,4) node {$\rho$};
    \draw (10.4,1) node {$g$};
    \draw (2,1) node {\large $\mathcal{R}$};
    \draw (4,0.2) node {$g^{\otimes k+1}$};
    \draw (6.76,2.5) node {\large $\Phi_U$};
    \draw (10.9,4) node {$U \rho U^\dagger$};
    \draw[decorate, decoration={brace,amplitude=5pt}] (0.7,5.5) -- (8.8,5.5) node[midway,yshift=12pt]{free operation};
\end{tikzpicture}
}
\vspace{-0.2cm}
\caption{Operational capabilities imposed on $g$. Resource replication generates $g^{\otimes k+1}$. Then $g^{\otimes k}$ is supplied to a free channel to simulate an arbitrary qubit unitary $U$. Thus, one copy of $g$ remains available after the simulated operation and can be reused catalytically.}
\label{fig:capabilities}
\end{figure}

Here, we remark that our setting is invariant under a common local unitary change of basis: conjugating the free-state hierarchy and the corresponding free operations by the same local unitary preserves all assumptions and both operational capabilities.
We therefore regard resource theories related by such a transformation as isomorphic.

\paragraph{Qubit resource-structure dichotomy.—}
We now specialize to qubit single systems. Let
$g=\ket{\gamma}\!\bra{\gamma}$ be the pure resource state and
$g^\perp=\ket{\gamma^\perp}\!\bra{\gamma^\perp}$ its orthogonal state.
We exclude the locally trivial cases in which the single-system free-state set is either a singleton or the entire Bloch ball, and call the remaining resource theories non-trivial. Operationally, a singleton free sector can neither carry nonzero classical information through the choice of free-state preparation nor admit any nontrivial variation of a free preparation, whereas the full Bloch ball leaves no state resource at all. To quantify the separation between the resource state and the free sector, define
\begin{equation}
    s_g:=\max_{\sigma\in\cF_1}\Tr(g\sigma).
    \label{eq:nontrivial-overlap}
\end{equation}
Since $g$ is nonfree, $s_g<1$; if $s_g=0$, every free qubit state is supported on $g^\perp$, forcing $\cF_1=\{g^\perp\}$. Hence $0<s_g<1$.

Within this nontrivial regime, the two operational capabilities impose a strong local rigidity.

\begin{lemma}[Overlap rigidity]
\label{lem:overlap-rigidity}
For a non-trivial resource theory with qubit single systems satisfying
Assumption~\ref{assumptions} and
Capabilities~\ref{cap:replication} and~\ref{cap:universal-simulation},
every free state $\sigma\in\cF_1$ satisfies
\begin{equation}
    \Tr(g\sigma)
    =
    \Tr(g^\perp\sigma)
    =
    \frac12.
    \label{eq:balance}
\end{equation}
\end{lemma}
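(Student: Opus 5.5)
The plan is to use universal simulation of a specific unitary to move any free state's overlap with \(g\) to its complementary value, and to combine this with the fact that CFK operations map free inputs to free outputs.

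First we record a basic property. A CFK channel maps free states to free states. Its ancilla \(\eta_E\) is free, \(\sigma\otimes\eta_E\in\cF_{AE}\) by (C1), and each completely free Kraus operator maps \(\cC_{AE}\) into \(\cC_B\) (take \(R\) trivial). Summing over Kraus branches keeps us in the convex cone, and trace preservation normalizes the result. The same argument applies to every branch of a CFK instrument.

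Now fix any unitary \(U\) on the single qubit. Capability~\ref{cap:universal-simulation} gives a CFK channel \(\Phi_U\) with \(\Phi_U(\rho\otimes g^{\otimes k})=U\rho U^\dagger\). Restricted to inputs of the form \(\rho\otimes g^{\otimes k}\), this channel acts like \(U\), but \(g\) is not free, so we cannot directly conclude that \(U\sigma U^\dagger\) is free for free \(\sigma\). Instead we use a convexity and overlap argument. Let \(\sigma^\ast\in\cF_1\) attain the maximum \(s_g\) in Eq.~\eqref{eq:nontrivial-overlap}. The goal is to show \(s_g\le 1/2\) and also \(\min_{\sigma\in\cF_1}\Tr(g\sigma)\ge 1/2\), which together give Eq.~\eqref{eq:balance}.

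The key step is to play replication against simulation. Replication and simulation compose into CFK operations, so \(g\) can be converted by a CFK channel into \(h:=UgU^\dagger\) for every \(U\): replicate \(g\) to \(g^{\otimes(k+1)}\), apply \(\Phi_U\) to one copy, and discard the rest. Take \(U\) with \(UgU^\dagger=g^\perp\). Then there is a CFK channel \(\Lambda\) with \(\Lambda(g)=g^\perp\), and likewise one with \(\Lambda'(g^\perp)=g\) (namely \(\Lambda\circ\Lambda\) composed suitably, using \(U^2\propto I\) on the relevant subspace). We then want to exploit the fact that CFK maps are \emph{completely} free at the Kraus level. Apply \(\Lambda\otimes\mathrm{id}\) to free correlated states and use (C2) to reach a contradiction unless the overlaps balance. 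More concretely, we would write \(\sigma=p\,g+(1-p)g^\perp+\text{off-diagonal}\) and note that \(\Lambda\) is a fixed channel. By linearity, \(\Lambda(\sigma)\) is free, so \(\Lambda\) maps \(\cF_1\) into itself. Meanwhile \(\Lambda\) sends \(g\mapsto g^\perp\), and composing with \(\Phi_{U'}\) for a family of \(U'\) controls \(\Lambda(g^\perp)\). A monotonicity estimate along the \(g\)--\(g^\perp\) axis then gives: if \(s_g>1/2\), repeated application of such free maps pushes a free state's overlap with \(g\) arbitrarily close to 1 in the closed set \(\cF_1\), contradicting \(g\notin\cF_1\). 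A symmetric argument (swapping roles via \(\Lambda\)) handles overlaps below \(1/2\).

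The main obstacle is exactly this last step. We know \(\Lambda\) only on \(g\), not on \(g^\perp\) or on coherences, so controlling \(\Lambda(\sigma)\) for general free \(\sigma\) is delicate. The first thing to try is to use the instrument form of simulation. Simulate the two-outcome measurement \(\{g,g^\perp\}\) together with conditional preparations, which gives CFK branches acting on arbitrary inputs \(\rho\) exactly as prescribed, with the resource copies tensored in. Writing the free state \(\sigma\otimes g^{\otimes k}\) is not possible, so instead we would apply the branch maps to \(\sigma\otimes\tau\), where \(\tau\) is free and close to \(g\) in overlap \(s_g\). We would then bound the output overlap linearly in \(s_g\), and derive the self-consistency inequality \(s_g\le 1-s_g\), with the symmetric inequality coming from the minimum overlap.
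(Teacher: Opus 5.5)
Your preliminary facts are correct: CFK channels map free states to free states, and replication plus simulation give a CFK channel $\Lambda$ with $\Lambda(g)=g^\perp$. Your target $s_g\le\frac12\le\min_{\sigma\in\cF_1}\Tr(g\sigma)$ is also the right one. But there is a genuine gap exactly at the step you flag, and the routes you sketch cannot close it.

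\emph{Why the sketched routes fail.} $\Lambda$ is determined only at the input $g$, so $\Lambda\circ\Lambda(g)=\Lambda(g^\perp)$ is unknown and your $\Lambda'$ is unjustified. The simulators $\Phi_{U'}$ are likewise pinned down only on the program-$g$ sector, so they cannot control $\Lambda(g^\perp)$. No mechanism is given for ``repeated application drives overlaps to $1$''. In fact the existence of $\Lambda$ is harmless by itself:
\begin{itemize}
\item In both target theories the free unitary $Z$ sends $g$ to $g^\perp$ while keeping every free overlap at $\frac12$.
\item Even for a hypothetical $\cF_1=\{xg+(1-x)g^\perp:x\in[0.2,0.6]\}$, which has $s_g=0.6$, the channel $\rho\mapsto\Tr(g\rho)\,g^\perp+\Tr(g^\perp\rho)\,(0.6g+0.4g^\perp)$ sends $g\mapsto g^\perp$ and maps $\cF_1$ into itself.
\end{itemize}
Your fallback, feeding a product input $\sigma\otimes\tau$ with $\tau$ free, fails for the same reason. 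The program register has weight about $1-s_g$ on $g^\perp$ (plus coherences), and there the simulator acts by an arbitrary map $\mathcal M$. For example, take the swap-unitary simulator and the input $\tau\otimes\tau$ with $\tau=s_gg+(1-s_g)g^\perp$. The output overlap with $g$ is $s_g(1-s_g)+(1-s_g)\Tr[g\,\mathcal M(\tau)]$, which respects the bound $\le s_g$ for every $s_g$. So no inequality $s_g\le1-s_g$ can come from upper bounds on product inputs.

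\emph{What is missing.} You need a way to pin down that unknown sector. The paper does this in three stages.

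(i) From replication alone: since $s_g<1$ and $\mathcal R$ of a maximizer is free, both marginals of $\mathcal R$ preserve $\Tr(g\,\cdot)$. Hence $\mathcal R(g^\perp)=g^\perp\otimes g^\perp$, and $\mathcal R(\sigma)$ has no weight on $g\otimes g^\perp$ or $g^\perp\otimes g$. Its marginals then show that the dephased states $\tau_{\max}$ and $\tau_{\min}$ are free. Linearity gives the free, classically correlated state $\Omega_{\max}=s_g\,g\otimes g+(1-s_g)\,g^\perp\otimes g^\perp$. Feeding $\Omega_{\max}$, whose data register is correlated with the program register, to a simulator forces the program-$g^\perp$ branch to act on the known data state $g^\perp$. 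Writing the Kraus operators in the resource basis, with program-$g$ columns fixed by exact simulation, reduces all unknowns to positive matrices $Y,Z$ of trace at most one.

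(ii) Upper bounds alone never suffice, so you must use the two-sided window $[m_g,s_g]$ and first collapse it. With the L\"uders instrument $\rho\mapsto g\rho g$, $\rho\mapsto g^\perp\rho g^\perp$, the paper first excludes $m_g=0$, using the tripartite free state obtained by replicating one half of $\Omega_{\max}$. It then derives $Z=g^\perp$ and $(s_g-m_g)(\Tr Y-2)\ge0$, i.e.\ $m_g=s_g$.

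(iii) Only with this equality does the simulator of $U=\ket{\gamma^\perp}\bra{\gamma}+\ket{\gamma}\bra{\gamma^\perp}$ close the argument. Applied to $\Omega_{\max}$ and to $\tau_{\max}^{\otimes2}$, it gives $Z_{00}=s_g/(1-s_g)\le1$ and $Y_{00}=(2s_g-1)/(1-s_g)\ge0$, hence $s_g=\frac12$.

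None of these ingredients appears in your proposal: the dephasing and correlation consequences of replication, correlated data--program inputs, and the equal-overlap step. As written, it does not establish the lemma.
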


The proof is given in the Supplemental Material.
Its essential ingredients are exact replication, a simulated projective measurement in the resource basis, and a simulated unitary interchanging $g$ and $g^\perp$.

Equation~\eqref{eq:balance} has a direct Bloch-sphere interpretation: every free state is unbiased with respect to the resource basis.
Choosing coordinates with $g=(I+Y)/2$, all free Bloch vectors therefore lie in the orthogonal $XZ$ plane, and the maximally mixed state is free.
By the non-triviality assumption, the local free geometry is therefore either one- or two-dimensional.
The Supplemental Material shows that these two possibilities complete, respectively, to the local structures of parity asymmetry and imaginarity.
Thus the operational requirements themselves generate a unitary--antiunitary dichotomy, even though no symmetry principle is assumed from the outset.

The consequences are not confined to single systems. In fact, the same
capabilities force a genuinely nonlocal structure of the composite free
sector. Define
\begin{equation}
    \FSEP(A{:}B)
    :=
    \operatorname{conv}
    \left\{
        \sigma_A\otimes\tau_B:
        \sigma_A\in\cF_A,\
        \tau_B\in\cF_B
    \right\}.
    \label{eq:fsep}
\end{equation}
With this notation, we show the following.

\begin{proposition}[Global free structure]
\label{prop:composite-structure}
For a non-trivial resource theory satisfying Assumption~\ref{assumptions} and \(s_g>0\),
Capability~\ref{cap:replication} alone implies
\begin{equation}
    \FSEP(A{:}B)
    \subsetneq
    \cF_{AB}.
    \label{eq:fsep-nogo}
\end{equation}
If, in addition, the single systems are qubits and
Capability~\ref{cap:universal-simulation} holds, then
$\cF_{AB}$ contains a maximally entangled pure state.
\end{proposition}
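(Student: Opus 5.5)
The proof has two parts: a replication-only argument that the free composite sector strictly exceeds $\FSEP(A{:}B)$, and a qubit argument, built on Lemma~\ref{lem:overlap-rigidity}, that exhibits a free maximally entangled pure state.

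For the first claim I would argue by contradiction. Suppose $\cF_{AB}=\FSEP(A{:}B)$ for every bipartite split, and by iteration for multipartite splits as well. Take the CFK replication channel $\mathcal R(X)=\widetilde{\mathcal R}(X\otimes\eta_E)$ with Kraus operators $K_i$ that are completely free. Split the free ancilla $\eta_E$ into its free product components. Because $\mathcal R(g)=g^{\otimes 2}$ is pure, every branch $K_i(\gamma\otimes\cdot)$ acting on each free ancilla component must output a multiple of $g^{\otimes2}$.

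The key step is to convert this into resource generation on free inputs, using $s_g>0$. Choose a free $\sigma\in\cF_1$ with $\Tr(g\sigma)=s_g>0$, and purify it (or decompose it) so that $\sigma$ has a component along $\ket\gamma$. Then apply a single Kraus branch to $\sigma\otimes\eta_E$, extended by a reference $R$ if necessary. The output is a positive operator that must lie in $\cC_{PP}$, and it contains the term $|c|^2 g^{\otimes2}$ together with cross terms.

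I then want to compare the overlap with $g^{\otimes 2}$ of this output against what separable free states allow. For any $\tau\in\FSEP$ we have $\Tr(g^{\otimes2}\tau)\le s_g^2$. For the branch output, the overlap should be at least proportional to $s_g$, which comes from the component along $\gamma$ being mapped exactly to $g^{\otimes2}$. Concretely, I would use the reference system $R$: apply the branch to a free state correlated with $R$ (for instance $\sigma\otimes$ a free state on $R$), and obtain a normalized output whose overlap with $g^{\otimes2}$ exceeds $s_g^2$. Since $s_g<1$ gives $s_g>s_g^2$, this contradicts $\cF_{PP}=\FSEP$. The contradiction shows that some free state is not separable-free. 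I expect this to be the main obstacle: bounding the cross terms requires care. I would first try to choose $\sigma$ as the maximizer of $\Tr(g\,\cdot)$ and exploit linearity of the sum over branches, $\sum_i K_i(\cdot)K_i^\dagger$. That yields an output overlap of at least $s_g$ after tracing out nothing, because $\Tr(g^{\otimes 2}\mathcal R(\sigma))\ge \Tr(g\sigma)\cdot 1$ up to the off-diagonal contributions. Those off-diagonal contributions I would handle via positivity: $\mathcal R(g^\perp)$ has some overlap $\geq0$ with $g^{\otimes 2}$, and the coherences must be controlled by dephasing. If the CFK structure does not let me dephase for free, I would instead use the branchwise CFK property together with (C2) marginals.

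For the qubit part I would use Lemma~\ref{lem:overlap-rigidity}: in the basis $g=(I+Y)/2$, every free Bloch vector lies in the $XZ$ plane. Universal simulation lets us simulate a CNOT (or any entangling two-qubit unitary) with a CFK instrument consuming $g^{\otimes k}$. I would apply it to a free product input such as $\ket{+}\!\bra{+}\otimes\ket{0}\!\bra{0}$, with $\ket{\pm}$ and $\ket{0},\ket{1}$ lying in the $XZ$ plane. This works once I check that these pure states are free in both surviving geometries; in the one-dimensional case I would instead pick the free pure pair along the allowed axis. Completely free Kraus operators map the free cone into itself, but the simulation consumes nonfree $g^{\otimes k}$, so direct closure does not apply. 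Instead I would use replication plus the exact simulation identity in the following way. The Kraus branches acting on $\rho\otimes g^{\otimes k}$ produce $U\rho U^\dagger$ exactly for every $\rho$, so each Kraus operator factorizes as $K_i(\ket\psi\otimes\ket{\gamma}^{\otimes k}\otimes \eta)=c_i U\ket\psi$. Feeding in the free state $(I/2)^{\otimes k}$ in place of $g^{\otimes k}$, where $I/2$ is free by the lemma, the $g$-component contributes $2^{-k}U\rho U^\dagger$ plus positive remainder terms. The remainder terms are the obstacle here too, and I would handle them with the $\gamma\leftrightarrow\gamma^\perp$ symmetry implied by the simulated swapping unitary. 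This gives $U\rho U^\dagger$ as a convex component of a free state, and hence a free state by an extremality argument in the cone when $U\rho U^\dagger$ is pure. With $U$ the CNOT, the result is a free maximally entangled Bell state.
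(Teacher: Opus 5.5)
\textbf{Part 1.} Your first part follows the paper's argument: compare $\Tr[(g\otimes g)\mathcal R(\sigma_*)]$ for a maximizer $\sigma_*$ of $\Tr(g\,\cdot)$ with the bound $s_g^2$ that holds on $\FSEP$. The step you flag as the main obstacle is left open, though, and it is closed by a one-line channel fact, not by dephasing or marginals. Since $g$ and $g^{\otimes2}$ are pure, monotonicity of the squared fidelity under $\mathcal R$ gives $s_g=F(g,\sigma_*)\le F(\mathcal R(g),\mathcal R(\sigma_*))=\Tr[(g\otimes g)\mathcal R(\sigma_*)]$. Equivalently, a Stinespring isometry $V$ of $\mathcal R$ satisfies $V\ket\gamma=\ket{\gamma}\ket{\gamma}\ket{e}$, so the cross term $\bra{\gamma^\perp}V^\dagger(g^{\otimes2}\otimes I)V\ket\gamma=\langle\gamma^\perp|\gamma\rangle$ vanishes. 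Then $\mathcal R(\sigma_*)\in\cF_{AB}\setminus\FSEP$ because $s_g>s_g^2$, and $\FSEP\subseteq\cF_{AB}$ by (C1) and convexity.

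The branchwise analysis and the reference system are unnecessary. Your contradiction hypothesis that $\cF=\FSEP$ on \emph{every} split is actually harmful. Using it to decompose $\eta_E$ would only show that \emph{some} split fails, which is weaker than the claim about $A{:}B$.

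\textbf{Part 2.} The second part has a genuine gap. Your step concluding that $U\rho U^\dagger$, being a convex component of the free state $\Phi_U(\rho\otimes(I/2)^{\otimes k})$, is free by extremality is false. The free set is not a face of the state space: $I/2=\frac12(g+g^\perp)\in\cF_1$ while $g\notin\cF_1$. Your remainder terms are exactly the outputs on program strings containing $g^\perp$, and nothing available at the level of Lemma~\ref{lem:overlap-rigidity} determines them. The simulated $\gamma\leftrightarrow\gamma^\perp$ swap is used in the paper only to fix overlaps with $g$, giving $s_g=1/2$. Your inputs are also unjustified: that lemma only places $\cF_1$ in the $XZ$ disk with $I/2\in\cF_1$, so freeness of pure states such as $\ket{+}$ or $\ket{0}$ is not yet known.

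In the paper the Bell state is obtained only as a corollary of the full classification. The missing ingredient is identifying the partner output $\Lambda_h(g^\perp)$:
\begin{itemize}
\item For pure $h$ and a CFK channel $\Lambda_h$ with $\Lambda_h(g)=h$, apply $\Lambda_h$ to one half of $\omega_+=\mathcal R(I/2)=\frac12(g\otimes g+g^\perp\otimes g^\perp)$. This yields the free state $\frac12(h\otimes g+\Lambda_h(g^\perp)\otimes g^\perp)$.
\item The upper inclusion $\cF_m\subseteq\Fix(\vartheta^{\otimes m})$ then forces $\Lambda_h(g^\perp)=\vartheta^{\otimes m}(h)$. That inclusion itself requires the invariant-span lemma and exact simulation of the measure-and-prepare channels $\mathcal N_C$.
\item Hence $\Lambda_h(I/2)=\frac12[h+\vartheta^{\otimes m}(h)]$ is free.
\item Taking $h=\Phi^+$, which is real and of even parity and therefore $\vartheta^{\otimes2}$-invariant in both cases, gives the free maximally entangled state.
\end{itemize}
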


The proposition reveals two distinct levels of composite structure.
Already replication cannot be supported by a free sector constructed only from mixtures of locally free product states: copying the resource requires genuinely global free correlations.
This first conclusion does not rely on the qubit geometry or on universal simulation and therefore holds in arbitrary finite dimension under the stated condition \(s_g>0\).
For qubits, imposing universal utilization strengthens the constraint further, forcing maximal entanglement itself into the free sector.
The first statement is proved in the End Matter, while the second is established as part of the proof of Theorem~\ref{thm:classification} in the Supplemental Material.

What remains is whether the two local possibilities can be completed in different ways on larger composite systems.
They cannot: the alchemical requirements determine not only the local geometry but the entire finite-qubit resource theory.

\begin{theorem}[Complete finite-qubit classification of alchemical resource theories]
\label{thm:classification}
For a non-trivial alchemical resource theory with qubit single systems satisfying Assumption~\ref{assumptions}, the free-state hierarchy is, up to a common local unitary change of basis, exactly one of the following two structures for every finite \(n\).

\emph{(I) Imaginarity:}
\begin{equation}
    \cF_n
    =
    \left\{
        \rho\in\cD\!\left((\mathbb C^2)^{\otimes n}\right)
        :
        \rho^*=\rho
    \right\}.
    \label{eq:classification-imaginarity}
\end{equation}

\emph{(P) Parity asymmetry:}
\begin{equation}
    \cF_n
    =
    \left\{
        \rho\in\cD\!\left((\mathbb C^2)^{\otimes n}\right)
        :
        [\rho,Z^{\otimes n}]=0
    \right\},
    \label{eq:classification-parity}
\end{equation}
where $\cF_n$ denotes the free-state set of $n$ qubits.




\end{theorem}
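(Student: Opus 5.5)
We start from Lemma~\ref{lem:overlap-rigidity}: in coordinates with $g=(I+Y)/2$, every free qubit state has Bloch vector in the $XZ$ plane, and by convexity and the symmetry forced by the simulated unitary interchanging $g$ and $g^\perp$, the maximally mixed state is free. Non-triviality leaves two cases for $\cF_1$: a one-dimensional segment or a two-dimensional region in the $XZ$ plane. In the one-dimensional case we rotate about $Y$ by a common local unitary so that the segment lies on the $Z$ axis. In the two-dimensional case we do the same so that $\cF_1$ sits in the $XZ$ disc. We then show that $\cF_1$ is the full segment, respectively the full disc. The mechanism is as follows. By Capability~\ref{cap:universal-simulation}, every single-qubit unitary $V$ commuting with the relevant symmetry ($Z$-diagonal unitaries, respectively real orthogonal matrices) can be simulated by some CFK channel $\Phi_V$ acting on $\rho\otimes g^{\otimes k}$. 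We would like to argue that $V$ itself maps free states to free states. To get there, we use that the Kraus operators of $\Phi_V$ are completely free, so the branch $K(\sigma\otimes g^{\otimes k})K^\dagger$ lies in the cone. Then, because $V$ fixes the resource axis up to phase, we trace out and use (C2) to transport an extremal free state to any other point of the orbit. This yields (P) at $n=1$, namely all $Z$-diagonal states, and (I) at $n=1$, namely all real states.

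Next we lift to $n$ qubits. The inclusion ``$\supseteq$'' comes from showing that the generating set is free. First, products of local free states are free by (C1). Second, as in Proposition~\ref{prop:composite-structure}, universal simulation of two-qubit unitaries forces a maximally entangled state into $\cF_2$; in case (P) this is $(\ket{00}+\ket{11})/\sqrt2$, and in case (I) it is a real Bell state. More generally, we simulate free two-qubit unitaries, meaning those commuting with $Z\otimes Z$ or real orthogonal ones. We argue, as in the single-qubit step, that the corresponding CFK simulator together with marginal consistency makes these unitaries free-state preserving on $\cF_2$. Real orthogonal gates and the parity-preserving group act transitively enough that, from product free states and convexity, we generate all real density matrices, respectively all $Z^{\otimes n}$-commuting ones. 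For the latter we use that every such state is a convex combination of pure states in the $\pm1$ eigenspaces of $Z^{\otimes n}$, which are reachable from $\ket{0\cdots0}$ or $\ket{10\cdots0}$ by parity-preserving unitaries composed from one- and two-qubit gates. Permutation invariance (C3) lets us apply these gates on arbitrary pairs.

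The inclusion ``$\subseteq$'' is the reverse direction: no non-real state (respectively no state with parity coherence) is free. Here we use Capability~\ref{cap:replication} and the CFK property together with the fact that $g$ is maximally resourceful. Suppose some $\rho\in\cF_n$ has $\rho^*\neq\rho$ (respectively $[\rho,Z^{\otimes n}]\neq0$). Then $\rho$ has a nonzero off-diagonal component along an imaginary (respectively odd-parity) Pauli string. Using free unitaries, i.e.\ real Cliffords or parity-preserving gates, followed by partial traces, we can concentrate this component onto a single qubit. The result is a free single-qubit state with nonzero $Y$ component, or a nonzero $X$/$Y$ component in case (P), contradicting the $n=1$ characterization. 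This requires that the concentrating unitaries be free-state preserving, which is the property established in the previous step.

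The main obstacle is the step ``simulable by a CFK channel consuming $g$'' $\Rightarrow$ ``the target unitary preserves $\cF_n$''. CFK branches only preserve cones on the joint system including the resource copies, and $g^{\otimes k}$ is not free, so this does not follow directly. I would first try a structural approach. Overlap rigidity implies that $I/2$ and hence the twirl of $g$ are free. I would then write $g$ as a combination of the free state $(g+g^\perp)/2$ and the resource direction, and apply the CFK simulator to the free inputs $\sigma\otimes((g+g^\perp)/2)^{\otimes k}$. The plan is to show, by parity/conjugation symmetry ($g\leftrightarrow g^\perp$ implemented by a simulated unitary), that the output averages to $V\sigma V^\dagger$ for symmetric $V$. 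If this fails, the fallback is to characterize completely free Kraus operators directly as those commuting with, or antiunitarily intertwined by, the involution, and to conclude from there.
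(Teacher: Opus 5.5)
Your argument has a genuine gap. Every part of it depends on one claim that is never established: that a unitary $V$ commuting with the involution maps $\cF_n$ into itself. That claim is what you use to fill out the full disc or segment at $n=1$, to generate all $\vartheta$-invariant states for ``$\supseteq$'', and to concentrate an anti-invariant Pauli component onto one qubit for ``$\subseteq$''. As you note, exact CFK simulability does not give it, because the program $g$ is not free.

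Your averaging fix does not close the gap. One-copy simulation only yields $\Phi_V(\sigma\otimes I/2)=\tfrac12[V\sigma V^\dagger+\Phi_V(\sigma\otimes g^\perp)]\in\cF_n$. Capability~\ref{cap:universal-simulation} constrains $\Phi_V$ only on the program-$g$ sector. The identity $\Phi_V(\sigma\otimes g^\perp)=\vartheta(V)\sigma\vartheta(V)^\dagger$ is a consequence of the classified CFK dynamics (it is exactly what the proof of Corollary~\ref{cor:simulation-tradeoff} uses), so you cannot use it as an input beforehand. Inserting a simulated $g\leftrightarrow g^\perp$ swap on the program register does not help either, since it consumes another copy of $g$.

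Your fallback is circular. Characterizing completely free Kraus operators as those intertwined by $\vartheta$ requires knowing that the $2m$-qubit maximally entangled state is free and knowing the exact composite free cones, i.e.\ the theorem itself. For the same reason you cannot cite the second half of Proposition~\ref{prop:composite-structure} to put a Bell state into $\cF_2$; the paper derives that statement from the classification.

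The missing idea is to prove the upper inclusion first, with no free unitaries at all. Use overlap rigidity as a linear witness: every CFK channel $\Phi$ into one qubit satisfies $\Tr[\Phi^\dagger(Y)\,\Omega]=0$ for all free inputs $\Omega$.

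\begin{itemize}
\item For each anti-invariant Pauli string $C$, with $P_\pm=\tfrac12(I\pm C)$, the paper simulates the measure-and-prepare channel $\rho\mapsto\Tr(P_+\rho)g+\Tr(P_-\rho)g^\perp$, whose dual sends $Y$ to $C$.
\item Norm saturation then forces $\Phi_C^\dagger(Y)=C\otimes g+K_C\otimes g^\perp$.
\item Testing against $\Omega\otimes I/2$ shows $K_C$ is anti-invariant.
\item Testing against the operators $\chi_A\otimes E$, with $\chi_A=\tfrac14(I\otimes I+A\otimes Y)$ and $E\in\mathsf E_{m-1}^+$, gives $K_C=C$ and hence $\Tr(C\Omega)=0$. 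These operators lie in $\operatorname{span}_{\mathbb R}\cF_{m+1}$ by the invariant-span lemma $\mathsf E_m^+\subseteq\operatorname{span}_{\mathbb R}\cF_m$, which is built from $\omega_+=\mathcal R(I/2)$, product closure and the converters $\Lambda_A$.
\end{itemize}

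The lower inclusion, including the full local disc or segment, then follows by applying the pure-state converter $\Lambda_h$ to half of $\omega_+$. The resulting free state $\tfrac12[h\otimes g+\Lambda_h(g^\perp)\otimes g^\perp]$ must be $\vartheta$-invariant, which forces $\Lambda_h(g^\perp)=\vartheta^{\otimes m}(h)$. Hence $\Lambda_h(I/2)=\tfrac12[h+\vartheta^{\otimes m}(h)]$ is free, and convexity finishes. Free-state preservation by symmetric unitaries only becomes available after this classification, so your argument has the logical order inverted.
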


The complete proof is given in the Supplemental Material.
The state classification also fixes the CFK dynamics: for a CFK Kraus operator $K:(\mathbb C^2)^{\otimes m}\to(\mathbb C^2)^{\otimes n}$, up to an overall phase, $K$ is real in case (I), while in case (P) it satisfies $Z^{\otimes n}K=\pm K Z^{\otimes m}$.
The theorem shows that the rigidity first visible in the single-qubit Bloch sphere propagates through the entire composite hierarchy:
neither alternative composite free-state structures nor alternative free dynamics remain compatible with the alchemical requirements.
The inverse problem requirements therefore recover, rather than assume, the unitary symmetry of parity asymmetry or the antiunitary symmetry of imaginarity.

\paragraph{A trade-off between exact universality and resource-state robustness.—} Capability~\ref{cap:universal-simulation} represents a maximal form of computational capability: together with the resource state $g$, free operations can exactly simulate an arbitrary qubit instrument, and hence an arbitrary qubit unitary.
Our classification reveals a sharp trade-off behind this universality.
Once the alchemical requirements are imposed simultaneously, resource states that are not exactly maximal lose their value for exact simulation of nonfree unitary dynamics.

This behavior is closely related to the exact no-programming theorem, which requires orthogonal program states for distinct unitary channels~\cite{NielsenChuang1997}, and to the requirement of perfect asymmetry for exact symmetry-breaking tasks~\cite{MarvianSpekkens2012}.
A closely related all-or-nothing phenomenon has also been established for imaginarity, where maximal imaginarity enables strict universality, whereas nonmaximal resources cannot implement non-real gates exactly~\cite{NakayamaTakeuchiAkibue2026}.

For the two theories in Theorem~\ref{thm:classification}, define
\begin{equation}
    \vartheta(\eta)
    =
    \begin{cases}
        \eta^*, & \text{(I)},\\
        Z\eta Z, & \text{(P)},
    \end{cases}
    \label{eq:simulation-involution}
\end{equation}
and call $\eta$ \emph{simulation-maximal} when
\begin{equation}
    F\!\left(\eta,\vartheta(\eta)\right)=0,
    \label{eq:simulation-max}
\end{equation}
where $F$ denotes the squared Uhlmann fidelity.
The induced CFK characterization above implies that a fixed free simulator implementing a unitary with resource state $\eta$ also implements the corresponding symmetry-transformed unitary with $\vartheta(\eta)$. The exact no-programming theorem then implies that, unless these two unitary channels coincide, the two resource states must be orthogonal; a detailed proof is given in the End Matter. This gives the same all-or-nothing structure in both surviving resource theories:
the designated maximal resource state $g$ can exactly simulate arbitrary qubit unitaries, whereas any departure from simulation maximality removes the ability to implement non-free unitary operations exactly.

\begin{corollary}[Exact-simulation trade-off]
\label{cor:simulation-tradeoff}
For a non-trivial resource theory satisfying Assumption~\ref{assumptions},
Capabilities~\ref{cap:replication} and
\ref{cap:universal-simulation} imply that any
non-simulation-maximal resource state can exactly simulate only free
unitary operations.
\end{corollary}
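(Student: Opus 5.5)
By Theorem~\ref{thm:classification}, we may fix coordinates so that the free-state hierarchy is exactly case (I) or case (P). We then work with the induced description of CFK Kraus operators stated after the theorem: up to an overall phase, $K$ is real in case (I), and $Z^{\otimes n}K=\pm KZ^{\otimes m}$ in case (P). The plan is to turn this description into a covariance property of any fixed free simulator, and then apply the exact no-programming theorem.

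\textbf{Step 1 (covariance of simulators).} Extend $\vartheta$ to every system: complex conjugation in the computational basis in case (I), and conjugation by $Z^{\otimes n}$ in case (P). Let $\Phi$ be a CFK channel on $A\otimes P^{\otimes k}$, given by a free ancilla $\eta_E$ and completely free Kraus operators $K_i$. The ancilla satisfies $\vartheta(\eta_E)=\eta_E$, because the free states of the classification are exactly the $\vartheta$-fixed states. We show $\vartheta(K_iXK_i^\dagger)=K_i\vartheta(X)K_i^\dagger$ for every operator $X$. In case (I) this holds because $K_i^*=e^{i\phi}K_i$, and the phase cancels. In case (P) it holds because $Z^{\otimes n}K_iZ^{\otimes m}=\pm K_i$, and the sign squares away. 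Summing over $i$ and absorbing the fixed ancilla gives
\[
\vartheta\circ\Phi=\Phi\circ\vartheta .
\]

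\textbf{Step 2 (transfer to the conjugate resource).} Suppose $\Phi(\rho\otimes\eta)=U\rho U^\dagger$ for all $\rho$, where $\eta$ is the program state, possibly on several copies $P^{\otimes k}$. Apply $\vartheta$ to both sides and use Step 1. Since $\vartheta$ is a bijection on input states, replacing $\rho$ by $\vartheta(\rho)$ gives
\[
\Phi(\rho\otimes\vartheta(\eta))=\vartheta(U)\rho\,\vartheta(U)^\dagger ,
\]
where $\vartheta(U)=U^*$ in case (I) and $\vartheta(U)=Z^{\otimes}UZ^{\otimes}$ in case (P). This is where I expect the only care to be needed. We must check that $\vartheta$ commutes with the partial traces and tensor products used in the Stinespring form. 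It does, because the map is a product action, linear for (P) and antilinear but entrywise for (I). We must also treat general mixed program states $\eta$. If the phrase ``resource state'' covers $k$ copies, simulation-maximality should be applied to the full program state, or we should note that $F(\eta^{\otimes k},\vartheta(\eta)^{\otimes k})=F(\eta,\vartheta(\eta))^k$.

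\textbf{Step 3 (no-programming).} One fixed channel now implements $U$ with program $\eta$ and $\vartheta(U)$ with program $\vartheta(\eta)$. By the Nielsen--Chuang no-programming theorem, extended to mixed programs through purification or through its fidelity form, either the two unitary channels coincide or the programs are orthogonal, that is, $F(\eta,\vartheta(\eta))=0$. If $\eta$ is not simulation-maximal, the channels must therefore coincide: $\vartheta(U)=e^{i\alpha}U$. The remaining task is to show that this forces $U$ to be free, meaning that $U$ times some phase is a CFK Kraus operator.
\begin{itemize}
\item In case (I), $U^*=e^{i\alpha}U$ means $e^{i\alpha/2}U$ is real.
\item In case (P), $Z^{\otimes}UZ^{\otimes}=e^{i\alpha}U$. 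Squaring gives $e^{2i\alpha}=1$, so the sign is $\pm$. This matches the stated CFK characterization.
\end{itemize}
In both cases the converse direction of that characterization shows that the unitary channel is free. The same argument extends to instruments by applying it branchwise, but for unitaries the conclusion is exactly as stated.
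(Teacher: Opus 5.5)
Your proposal is correct and follows the paper's own proof. The paper likewise uses the classified CFK Kraus characterization to turn a simulator of $U$ with program $\eta$ into one of $U^\vartheta$ with program $\vartheta(\eta)$, then applies exact no-programming (handling mixed programs by pure-state decomposition and extremality of unitary channels) to get $U^\vartheta=e^{i\phi}U$, and finally reads off reality or $ZU=\pm UZ$. Your explicit covariance check $\vartheta\circ\Phi=\Phi\circ\vartheta$, including $\vartheta(\eta_E)=\eta_E$, and your remark $F(\eta^{\otimes k},\vartheta(\eta)^{\otimes k})=F(\eta,\vartheta(\eta))^k$ for multi-copy programs fill in details the paper leaves implicit.
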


Thus the very requirements that guarantee maximal exact computational capability also eliminate any intermediate resource regime for exact unitary simulation.
In this sense, universality comes at the cost of extreme sensitivity to resource-state quality: a state is either simulation-maximal and capable of unlocking nonfree exact dynamics, or its resourcefulness has no operational value for exact unitary simulation beyond what is already achievable freely.

\paragraph{Conclusion.—}
We formulated the inverse problem of alchemical resource theory, asking which resource structures admit a resource that can be exactly replicated and can universally simulate quantum instruments.
Under natural consistency assumptions and branchwise complete freeness, we showed that, for qubit single systems with a non-trivial local free-state space, the only alchemical resource theories are imaginarity and parity asymmetry, up to a common local unitary change of basis.

We further showed that replication alone already requires genuinely global free correlations, while the full capabilities force maximally entangled states into the free sector.
At the same time, exact universality eliminates any intermediate resource regime: non-maximal resource states can exactly simulate only free unitary operations.
Our results therefore reveal a fundamental trade-off between computational capability and the precision required for physical implementation, imposing a new constraint on physically realizable quantum information processing.

An important open question is how far this rigidity can be relaxed.
Replacing exact replication or simulation by approximate transformations~\cite{HilleryZimanBuzek2006,YangRennerChiribella2020,HokkyoTajima2026}, or increasing the single-system dimension beyond qubits, may allow new nontrivial resource structures.
More broadly, one can abstract characteristic properties of other resource theories as operational capabilities and use them as inputs to the inverse problem.
Such extensions may clarify which capabilities rigidly determine a resource structure and which instead permit broader families of physically distinct theories.

\paragraph{Acknowledgments} Y.N.~and H.A. contributed equally to this work.
The authors used OpenAI's ChatGPT (GPT-5.5 and GPT-5.6 Sol) to assist with discussions and verification of mathematical proofs and with drafting and revising the manuscript.
The AI was used under detailed and iterative instructions specifying the mathematical arguments, logical structure, and intended presentation.
All AI-assisted outputs, including mathematical arguments and manuscript text, were thoroughly and independently checked, revised, and approved by the authors; the final proofs, scientific claims, and presentation were determined by the authors.
H.A.~was supported by JSPS KAKENHI Grant Nos.~25KJ0043, 26K17031, and 26K21941.

\clearpage
\onecolumngrid
\section*{End Matter}


\paragraph{RNG operations admit artificial realizations of the prescribed capabilities.---}
Resource-nongenerating (RNG) operations are widely used as the largest class of operations preserving free states, but can be too permissive to represent physically meaningful free operations~\cite{ChitambarGour2019,ChitambarGour2016,LiuHuLloyd2017,AlbarelliEtAl2018,SaxenaEtAl2020,TajimaTakagi2025}.
Here this permissiveness already shows that the two-system simulation condition identified in Remark~\ref{rem} as sufficient under CFK no longer controls the global free-state hierarchy when CFK is replaced by RNG.

Let
\[
g=\ket{i}\bra{i},
\qquad
\ket{i}=\frac{\ket0+i\ket1}{\sqrt2},
\qquad
\bar g=g^*,
\]
and define
\begin{equation}
\cF_n^{\mathrm{RNG}}
:=
\left\{
\rho:\rho^*=\rho,\ 
\Tr[g^{\otimes |S|}\rho_S]\le\frac14
\ \text{for every }S\subseteq[n]\text{ with }|S|\ge4
\right\}.
\label{eq:end-rng-free}
\end{equation}
These sets are closed, convex, permutation invariant, and marginally consistent.
They are also closed under products: if a tested subset contains at least four systems from one factor, the bound follows directly, while if it contains at most three systems from each factor, reality implies that each factor has overlap at most \(1/2\) with the corresponding power of \(g\), so the product overlap is at most \(1/4\).
Thus the construction satisfies the state-side Assumptions~(A1)--(A2), and its single-system free set is the non-trivial imaginarity disk; moreover \(g\notin\cF_1^{\mathrm{RNG}}\). The only change from the setting of the main theorem is that Assumption~(A3) is replaced by RNG operations.

For \(n\le3\), the additional constraint is absent and \(\cF_n^{\mathrm{RNG}}\) is exactly the imaginarity free-state set.
For \(n\ge4\), the hierarchy is different; for example,
\[
\frac12\bigl(g^{\otimes n}+\bar g^{\otimes n}\bigr)
\notin
\cF_n^{\mathrm{RNG}}.
\]
Nevertheless, exact replication is RNG through
\begin{equation}
\mathcal R(\rho)
=
\Tr(g\rho)\,g^{\otimes2}
+
\Tr(\bar g\rho)\,\bar g^{\otimes2}.
\label{eq:end-rng-rep}
\end{equation}
Indeed, every real qubit state has equal overlap \(1/2\) with \(g\) and \(\bar g\), so the output is real and hence free on two qubits.
Operationally, however, Eq.~\eqref{eq:end-rng-rep} measures the resource basis and conditionally prepares the resource states \(g^{\otimes2}\) and \(\bar g^{\otimes2}\). Thus the implementation explicitly uses resource-generating branches; RNG declares only the averaged channel free, because these branches compensate on free inputs. This is precisely the kind of hidden resource preparation that branchwise complete freeness is intended to exclude.

The same hierarchy satisfies the two-system simulation condition of Remark~\ref{rem}.
For a target instrument \(\{\mathcal N_j:A\to B\}_j\) with at most two-qubit input and output systems, use one program qubit in state \(g\).
Then \(A\otimes P\) contains at most three qubits and \(B\) at most two, so both the simulator input and output free-state sets coincide with those of standard imaginarity.
Writing \(\mathcal N_j(\rho)=\sum_\alpha M_{j\alpha}\rho M_{j\alpha}^\dagger\) and \(V_\pm:=I_A\otimes\ket{\pm i}\), define
\begin{align}
K_{j\alpha}^{(0)}
&=
\frac1{\sqrt2}
\left(
M_{j\alpha}V_+^\dagger
+
M_{j\alpha}^*V_-^\dagger
\right),
\nonumber\\
K_{j\alpha}^{(1)}
&=
\frac{i}{\sqrt2}
\left(
M_{j\alpha}V_+^\dagger
-
M_{j\alpha}^*V_-^\dagger
\right).
\label{eq:end-rng-sim}
\end{align}
These Kraus operators are real, form an instrument, and satisfy
\(\Phi_j(\rho\otimes g)=\mathcal N_j(\rho)\).
Hence every outcome map is RNG for the present hierarchy.
Therefore this theory satisfies local non-triviality, exact replication, and the two-system simulation condition actually sufficient for the CFK classification proof, in addition to all state-side consistency assumptions, yet its higher-system free-state hierarchy differs from both classified structures.
The obstruction is precisely that RNG checks freeness only on the nominal input system and does not require stability under arbitrary reference extensions, while CFK does.

\paragraph{Replication requires free correlations beyond locally free mixtures.---}
We prove here the first statement of Proposition~\ref{prop:composite-structure}.
Let $\sigma_*\in\cF_1$ attain $s_g$, so that
\begin{equation}
s_g=\Tr(g\sigma_*).
\end{equation}
For an exact free replicator $\mathcal R$ satisfying $\mathcal R(g)=g^{\otimes2}$, define
\begin{equation}
\Omega:=\mathcal R(\sigma_*)
\in
\cF_{AB}.
\end{equation}
By monotonicity of fidelity,
\begin{equation}
s_g
=
F(g,\sigma_*)
\le
F(g^{\otimes2},\Omega)
=
\Tr[(g\otimes g)\Omega].
\label{eq:end-fsep-fidelity}
\end{equation}

Suppose that $\Omega\in\FSEP(A{:}B)$.
Then
\begin{equation}
\Omega
=
\sum_x p_x\,\sigma_x\otimes\tau_x
\end{equation}
for locally free states $\sigma_x\in\cF_A$ and $\tau_x\in\cF_B$.
By the definition of $s_g$,
\begin{align}
\Tr[(g\otimes g)\Omega]
&=
\sum_x
p_x
\Tr(g\sigma_x)
\Tr(g\tau_x)
\nonumber\\
&\le
s_g^2.
\end{align}
Together with Eq.~\eqref{eq:end-fsep-fidelity}, this gives
\begin{equation}
s_g\le s_g^2,
\end{equation}
which contradicts \(0<s_g<1\), where \(s_g>0\) is assumed in Proposition~\ref{prop:composite-structure} and \(s_g<1\) follows from the nonfreeness of \(g\).
Therefore,
\begin{equation}
\FSEP(A{:}B)
\subsetneq
\cF_{AB}.
\end{equation}

This argument relies only on exact replication and the basic state-side consistency assumptions, and does not require the qubit geometry or universal simulation.
Thus genuinely global free correlations are already required by scalable resource supply itself.
As stated in the second part of Proposition~\ref{prop:composite-structure}, for qubits universal simulation strengthens this conclusion further by forcing a maximally entangled pure state itself to be free.

\paragraph{Proof of the exact-simulation trade-off.---}
We prove Corollary~\ref{cor:simulation-tradeoff}. Let a CFK processor exactly simulate a qubit unitary \(U\) with program state \(\eta\). By the classified CFK dynamics, the same processor with the transformed program \(\vartheta(\eta)\) of Eq.~\eqref{eq:simulation-involution} implements the transformed unitary
\[
U^\vartheta
=
\begin{cases}
U^*, & \text{case (I)},\\
ZUZ, & \text{case (P)}.
\end{cases}
\]
If \(\eta\) is not simulation-maximal, Eq.~\eqref{eq:simulation-max} gives \(F(\eta,\vartheta(\eta))>0\). The exact no-programming theorem therefore forces the two unitary channels to coincide~\cite{NielsenChuang1997}; for mixed programs, this follows by decomposing them into pure states and using extremality of unitary channels. Hence
\[
\mathcal U^\vartheta=\mathcal U,
\qquad\text{and therefore}\qquad
U^\vartheta=e^{i\phi}U .
\]

In case (I), \(U^*=e^{i\phi}U\), so \(U\) is real up to a global phase. In case (P), \(ZUZ=e^{i\phi}U\); applying the transformation twice gives \(e^{2i\phi}=1\), and thus
\[
ZU=\pm UZ .
\]
Therefore \(U\) is free in either case. Conversely, the designated resource \(g\) is simulation-maximal and exactly simulates arbitrary qubit unitaries by Capability~\ref{cap:universal-simulation}. This proves the all-or-nothing exact-simulation trade-off.

\clearpage
\onecolumngrid

\setcounter{section}{0}
\setcounter{equation}{0}
\setcounter{assumption}{0}
\setcounter{capability}{0}
\setcounter{proposition}{0}
\setcounter{lemma}{0}
\setcounter{theorem}{0}
\setcounter{corollary}{0}

\newcommand{\Herm}{\operatorname{Herm}}
\newcommand{\spanR}{\operatorname{span}_{\mathbb R}}
\newcommand{\id}{\operatorname{id}}

\begin{center}
{\large\bfseries Supplemental Material for ``Inverse Problem of Alchemical Resource Theory: Replication and Universal Simulation Single Out Imaginarity and Parity Asymmetry''}
\end{center}

\section{Organization and proof structure}

The proof of Theorem~\ref{thm:classification} is organized as a direct chain of lemmas.
We first establish the CFK compositional properties and then show that the two-system simulation condition in Remark~\ref{rem} is sufficient for Capability~\ref{cap:universal-simulation}.
The remaining lemmas then determine the free-state hierarchy.

Let
\[
S:=\spanR\{\sigma-I/2:\sigma\in\cF_1\}.
\]
The logical structure is as follows.

\begin{enumerate}
\item
Lemma~\ref{lem:SM-CFK-operation-closure} establishes reference extension and the sequential, tensor-product, and coarse-graining closure of CFK instruments.
Using these properties together with exact replication, Lemma~\ref{lem:SM-two-local-to-full} shows that the two-system simulation condition of Remark~\ref{rem} already implies Capability~\ref{cap:universal-simulation} on arbitrary finite composite systems.

\item
Lemma~\ref{lem:SM-replication-structure} derives the basic consequences of exact replication needed for the local analysis.
Combining them with the simulation capability, Lemma~\ref{lem:SM-overlap-rigidity} proves
\[
\Tr(g\sigma)
=
\Tr(g^\perp\sigma)
=
\frac12
\qquad
(\sigma\in\cF_1),
\]
and hence \(I/2\in\cF_1\).
Choosing \(g=(I+Y)/2\), this gives
\[
S\subseteq\spanR\{X,Z\}.
\]
The non-triviality assumption excludes \(S=\{0\}\), so \(\dim S=1\) or \(2\).
Accordingly, after a rotation about the \(Y\) axis, define
\[
\vartheta(H)
=
\begin{cases}
H^*, & \dim S=2,\\
ZHZ, & \dim S=1.
\end{cases}
\]

\item
For
\[
\mathsf E_m^\pm
:=
\{H:\vartheta^{\otimes m}(H)=\pm H\},
\]
Lemma~\ref{lem:SM-even-span-new} uses the free state
\[
\omega_+
=
\frac12
\left(
g^{\otimes2}
+
(g^\perp)^{\otimes2}
\right)
\]
together with product closure and suitable pure-state conversions to prove
\[
\mathsf E_m^+
\subseteq
\spanR\cF_m
\qquad
(\forall\,m).
\]

\item
Lemma~\ref{lem:SM-odd-elimination-new} uses Lemma~\ref{lem:SM-even-span-new}, complete free-cone preservation, and the finite-system simulation property from Lemma~\ref{lem:SM-two-local-to-full} to eliminate every anti-invariant component of a free state:
\[
\cF_m
\subseteq
\{\rho:\vartheta^{\otimes m}(\rho)=\rho\}.
\]
This is the upper inclusion in the classification.

\item
Lemma~\ref{lem:SM-lower-new} uses the upper inclusion just obtained, exact preparation of arbitrary pure states, and the free state \(\omega_+\) to prove the reverse inclusion
\[
\{\rho:\vartheta^{\otimes m}(\rho)=\rho\}
\subseteq
\cF_m.
\]
Hence, for every finite \(m\),
\[
\cF_m
=
\{\rho:\vartheta^{\otimes m}(\rho)=\rho\}.
\]
For \(\dim S=2\) this is imaginarity,
\[
\cF_m=\{\rho:\rho^*=\rho\},
\]
whereas for \(\dim S=1\) it is parity asymmetry,
\[
\cF_m=\{\rho:[\rho,Z^{\otimes m}]=0\}.
\]
This proves Theorem~\ref{thm:classification}.
\end{enumerate}

The dependencies can be summarized compactly as
\[
\begin{array}{c}
\text{Lemma~\ref{lem:SM-CFK-operation-closure}}
\ \Longrightarrow\
\text{Lemma~\ref{lem:SM-two-local-to-full}}
\\[1mm]
\text{Lemma~\ref{lem:SM-replication-structure}}
\ \Longrightarrow\
\text{Lemma~\ref{lem:SM-overlap-rigidity}}
\ \Longrightarrow\
\dim S\in\{1,2\},\ \vartheta
\\[1mm]
\Downarrow
\\[1mm]
\text{Lemma~\ref{lem:SM-even-span-new}}
\ \Longrightarrow\
\text{Lemma~\ref{lem:SM-odd-elimination-new}}
\ \Longrightarrow\
\cF_m\subseteq\Fix(\vartheta^{\otimes m})
\\[1mm]
\Downarrow
\\[1mm]
\text{Lemma~\ref{lem:SM-lower-new}}
\ \Longrightarrow\
\Fix(\vartheta^{\otimes m})\cap\cD
\subseteq\cF_m
\\[1mm]
\Downarrow
\\[1mm]
\cF_m=\Fix(\vartheta^{\otimes m})\cap\cD
\quad(\forall m)
\\[1mm]
\Downarrow
\\[1mm]
\text{Theorem~\ref{thm:classification}: imaginarity or parity asymmetry.}
\end{array}
\]

The characterization of CFK dynamics is logically downstream of this chain.
Proposition~\ref{prop:SM-induced-CFK-dynamics} characterizes the completely free Kraus operators after the state classification, and Proposition~\ref{prop:SM-ancilla-absorption} then absorbs free ancillas into direct Kraus representations.

\section{CFK framework and induced free dynamics}

We first formulate the CFK operation class used in the main text and establish the closure properties needed below.
For a single system considered without an explicit subsystem label, we write its free-state set as \(\cF_1\); for an explicitly labeled bipartite system \(AB\), the corresponding sets are \(\cF_A,\cF_B,\cF_{AB}\); and when only the number of systems matters, we write \(\cF_n\) for an \(n\)-qubit system.
The associated free cones are denoted analogously by \(\cC_1,\cC_A,\cC_B,\cC_{AB}\), or \(\cC_n\):
\begin{equation}
\cC_X := \left\{ t\sigma: t\ge0,\ \sigma\in\cF_X \right\}.
\label{eq:SM-free-cone}
\end{equation}

A linear operator \(K:A\to B\) is called a \emph{completely free Kraus operator} if, for every finite reference system \(R\),
\begin{equation}
(K\otimes I_R)\cC_{AR}(K^\dagger\otimes I_R)
\subseteq \cC_{BR}.
\label{eq:SM-CFK}
\end{equation}
Thus each such Kraus branch preserves the free cone even in the presence of arbitrary correlated free side information.

A completely positive map \(\Phi:A\to B\) is called a \emph{CFK operation} if there exist a finite ancillary system \(E\), a free state \(\eta_E\in\cF_E\), and a completely positive map
\[
\widetilde\Phi:AE\to B
\]
admitting a Kraus representation entirely by completely free Kraus operators such that
\begin{equation}
\Phi(X)=\widetilde\Phi(X\otimes\eta_E).
\label{eq:SM-CFK-operation}
\end{equation}
A CFK instrument \(\{\Phi_j:A\to B\}_j\) is defined similarly, with a \emph{common} finite free ancilla \(\eta_E\) and maps \(\widetilde\Phi_j:AE\to B\), each admitting completely free Kraus operators, such that
\[
\Phi_j(X)=\widetilde\Phi_j(X\otimes\eta_E)
\]
and \(\sum_j\Phi_j\) is trace preserving.
A one-outcome CFK instrument is a CFK channel.

\subsection{Basic closure properties}

\begin{lemma}[Closure of completely free Kraus operators]
\label{lem:SM-CFK-closure}
Completely free Kraus operators are closed under complex scalar multiplication, sequential composition, tensor-product composition, and operator-norm limits.
\end{lemma}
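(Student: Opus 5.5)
The plan is to check the four closure properties one at a time against the defining condition \eqref{eq:SM-CFK}, using only two facts about the free cones: each \(\cC_X\) is a convex cone, so it is closed under nonnegative scaling, and it is closed, because \(\cF_X\) is compact. Fix an arbitrary finite reference system \(R\) throughout.

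\emph{Scalar multiplication.} Let \(K\) be completely free and \(c\in\mathbb C\). Then
\[
(cK\otimes I_R)X(cK\otimes I_R)^\dagger=|c|^2(K\otimes I_R)X(K^\dagger\otimes I_R).
\]
Since \(\cC_{BR}\) is closed under multiplication by \(|c|^2\ge0\), the image of \(\cC_{AR}\) stays in \(\cC_{BR}\). The case \(c=0\) is covered because \(0\in\cC_{BR}\).

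\emph{Sequential composition.} Let \(K:A\to B\) and \(L:B\to C\) be completely free. Then
\[
(LK\otimes I_R)X(LK\otimes I_R)^\dagger=(L\otimes I_R)\big[(K\otimes I_R)X(K\otimes I_R)^\dagger\big](L\otimes I_R)^\dagger.
\]
Applying the defining property of \(K\) and then that of \(L\), both with the same \(R\), maps \(\cC_{AR}\) into \(\cC_{BR}\) and then into \(\cC_{CR}\).

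\emph{Tensor products.} This is the step needing the most care. Write \(K\otimes L=(K\otimes I_{B'})(I_A\otimes L)\) for \(K:A\to B\) and \(L:A'\to B'\). By the sequential case, it suffices to show that \(I_A\otimes L\) is completely free and, symmetrically, that \(K\otimes I_{B'}\) is. For \(I_A\otimes L\) acting on \(AA'R\), I use the permutation invariance (C3) to regroup the systems as \(A'\otimes(AR)\). Since \(AR\) is itself a finite reference system, the completely free property of \(L\) with reference \(AR\) gives the required inclusion, after permuting back. The only subtlety is that conjugation by the permutation unitary commutes with applying \(L\otimes I\); this holds because the permutation only relabels tensor factors. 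The factor \(K\otimes I_{B'}\) is handled in exactly the same way, taking \(B'R\) as the reference system. This is the point where arbitrary references in \eqref{eq:SM-CFK} are essential.

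\emph{Norm limits.} Suppose \(K_n\to K\) in operator norm, with each \(K_n\) completely free. Then for each \(X\in\cC_{AR}\),
\[
(K_n\otimes I_R)X(K_n\otimes I_R)^\dagger\to(K\otimes I_R)X(K\otimes I_R)^\dagger
\]
in finite dimension. To see that \(\cC_{BR}\) is closed, I use compactness of \(\cF_{BR}\). If \(t_n\sigma_n\to Y\), then \(t_n=\Tr(t_n\sigma_n)\to\Tr Y\). If \(\Tr Y>0\), then \(\sigma_n\to Y/\Tr Y\in\cF_{BR}\). If \(\Tr Y=0\), then \(Y=0\in\cC_{BR}\). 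Hence the limit lies in \(\cC_{BR}\), and \(K\) is completely free.
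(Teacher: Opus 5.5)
Your proof is correct and follows the paper's argument essentially step by step. The paper uses the cone property for scalar multiplication and the two defining inclusions with the same reference for sequential composition. For tensor products it treats the untouched factor together with $R$ as an enlarged reference, with (C3) restoring the ordering, and for norm limits it relies on closedness of the free cone. Your only departures are cosmetic: you factor $K\otimes L$ in the opposite order, and you explicitly verify that $\cC_{BR}$ is closed, which the paper takes for granted.
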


\begin{proof}
Scalar multiplication follows because \(\cC_{BR}\) is a cone.
For sequential composition, if \(K:A\to B\) and \(L:B\to C\) are completely free, then for arbitrary \(R\) and \(X\in\cC_{AR}\),
\[
(K\otimes I_R)X(K^\dagger\otimes I_R)\in\cC_{BR},
\]
and applying \(L\) with the same reference gives
\[
(LK\otimes I_R)X(K^\dagger L^\dagger\otimes I_R)\in\cC_{CR}.
\]
For tensor products, apply \(K\) while treating the second input and \(R\) as the reference, then apply \(L\) while treating the first output and \(R\) as the reference; permutation consistency restores the tensor ordering.
Finally, if \(K_n\to K\) in operator norm and every \(K_n\) is completely free, then
\[
(K_n\otimes I_R)X(K_n^\dagger\otimes I_R)
\longrightarrow
(K\otimes I_R)X(K^\dagger\otimes I_R),
\]
and closedness of the free cone gives the claim.
\end{proof}

\begin{lemma}[Basic properties of CFK operations]
\label{lem:SM-CFK-operation-closure}
\label{lem:SM-complete-preservation}
The CFK operation class satisfies the following properties.
\begin{enumerate}
\item \emph{Complete free-cone preservation.}
For every CFK operation \(\Phi:A\to B\), every finite reference system \(R\), and every \(X\in\cC_{AR}\),
\begin{equation}
(\Phi\otimes\mathrm{id}_R)(X)\in\cC_{BR}.
\label{eq:SM-complete-preservation}
\end{equation}
\item \emph{Reference extension.}
If \(\{\Phi_j:A\to B\}_j\) is a CFK instrument, then
\(\{\Phi_j\otimes\mathrm{id}_R:AR\to BR\}_j\) is CFK for every finite \(R\).
\item \emph{Sequential composition.}
The sequential composition of two CFK instruments is CFK, with the pair of outcomes retained.
\item \emph{Tensor-product composition.}
The tensor product of two CFK instruments is CFK.
\item \emph{Coarse-graining.}
Any classical coarse-graining of the outcomes of a CFK instrument is again a CFK instrument.
\item \emph{Free ancillary extension.}
For every \(\eta_E\in\cF_E\), the channel \(X\mapsto X\otimes\eta_E\) is CFK.
\end{enumerate}
\end{lemma}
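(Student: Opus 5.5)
The plan is to prove the six items in an order that lets the later ones reuse the earlier ones. Items 1 and 6 come first and serve as building blocks; items 2--5 then reduce to bookkeeping of Kraus operators, using Lemma~\ref{lem:SM-CFK-closure} and the consistency conditions (C1)--(C3).

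\emph{Free ancillary extension (item 6).} The channel \(X\mapsto X\otimes\eta_E\) is already of the required form. Take the ancilla \(E\) with free state \(\eta_E\), and let \(\widetilde\Phi=\mathrm{id}_{AE}\), whose single Kraus operator is \(I_{AE}\). The identity is trivially a completely free Kraus operator. Here permutation invariance (C3) identifies \(\cC_{AER}\) with the cone of the reordered system, so no ordering issue arises.

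\emph{Complete free-cone preservation (item 1).} Write \(\Phi(X)=\widetilde\Phi(X\otimes\eta_E)\) with completely free Kraus operators \(K_\alpha\). For \(X\in\cC_{AR}\), the first step is to show \(X\otimes\eta_E\in\cC_{AER}\). This follows from product closure (C1), after rescaling \(X\) to a state, together with (C3) to reorder \(AR\otimes E\) into \(AER\). Next, each term \((K_\alpha\otimes I_R)(X\otimes\eta_E)(K_\alpha^\dagger\otimes I_R)\) lies in \(\cC_{BR}\) by Eq.~\eqref{eq:SM-CFK}. Finally, \(\cC_{BR}\) is a convex cone because \(\cF_{BR}\) is convex, so it is closed under finite sums. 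This gives Eq.~\eqref{eq:SM-complete-preservation}.

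\emph{Reference extension (item 2).} Keep the same ancilla and use Kraus operators \(K_\alpha\otimes I_R\). These are completely free by the tensor-product clause of Lemma~\ref{lem:SM-CFK-closure}, since \(I_R\) is completely free. The only point to check is that \(\eta_E\) is inserted next to \(A\) rather than next to \(R\). This is handled by a permutation unitary, which is completely free by (C3), composed sequentially.

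\emph{Sequential and tensor composition, coarse-graining (items 3--5).} For \(\{\Phi_j\}\) with ancilla \(\eta_E\) and \(\{\Psi_k\}\) with ancilla \(\eta_{E'}\), use the common ancilla \(\eta_E\otimes\eta_{E'}\), which is free by (C1). The composite outcome \((j,k)\) then has Kraus operators \((L_\beta\otimes I)(K_\alpha\otimes I_{E'})\) with \(E'\) routed by permutations. Tensor products are handled the same way, with Kraus operators \(K_\alpha\otimes L_\beta\). Both families are completely free by Lemma~\ref{lem:SM-CFK-closure}. Trace preservation of the total map holds because it is the composition (or tensor product) of trace-preserving maps. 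For coarse-graining, take the union of the Kraus families within each class: sums of CP maps with completely free Kraus representations again have such representations, and the total map is unchanged.

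The main obstacle is mostly notational: tracking the ancilla positions and the permutation unitaries consistently. This rests on the observation that every permutation unitary is completely free, which is exactly (C3) applied to the extended system including \(R\).
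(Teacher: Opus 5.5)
Your proposal is correct and takes the same route as the paper. For item 1 you adjoin the free ancilla via product closure and apply each completely free Kraus operator with $R$ as reference. For item 2 you reuse the realization with Kraus operators $K_\alpha\otimes I_R$. For items 3--5 you use the product ancilla together with composed or tensored Kraus operators, justified by Lemma~\ref{lem:SM-CFK-closure}. For item 6 you use the identity Kraus operator. Your explicit handling of subsystem reorderings via permutation unitaries (completely free by (C3)) and your check of trace preservation only spell out bookkeeping that the paper leaves implicit.
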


\begin{proof}
Let
\[
\Phi_j(X)
=
\sum_\alpha
K_{j\alpha}(X\otimes\eta_E)K_{j\alpha}^\dagger
\]
be a common-ancilla realization of a CFK instrument, where every \(K_{j\alpha}:AE\to B\) is completely free.

For complete free-cone preservation, take \(X_{AR}\in\cC_{AR}\).
Product closure gives
\[
X_{AR}\otimes\eta_E\in\cC_{AER}.
\]
Applying each \(K_{j\alpha}\) while leaving \(R\) untouched gives an element of \(\cC_{BR}\); summing over \(\alpha\) remains in this convex cone.
This proves Eq.~\eqref{eq:SM-complete-preservation}.
The same realization with Kraus operators \(K_{j\alpha}\otimes I_R\) proves reference extension.

For sequential composition, let a second CFK instrument have free ancilla \(\zeta_F\) and completely free Kraus operators \(L_{k\beta}:BF\to C\).
Using the product free ancilla \(\eta_E\otimes\zeta_F\), the joint-outcome branch \((j,k)\) has Kraus operators
\[
L_{k\beta}(K_{j\alpha}\otimes I_F),
\]
which are completely free by Lemma~\ref{lem:SM-CFK-closure}.
The tensor-product statement follows analogously from the product free ancilla and Kraus operators
\[
K_{j\alpha}\otimes L_{k\beta}.
\]
Coarse-graining merely joins Kraus families belonging to the coarse-grained outcomes and therefore preserves the same common-ancilla realization.
Finally, \(X\mapsto X\otimes\eta_E\) is realized by adjoining \(\eta_E\) and applying the identity operator on \(AE\), which is completely free.
\end{proof}

\begin{lemma}[Two-system simulation suffices for finite-system simulation]
\label{lem:SM-two-local-to-full}
\label{lem:SM-one-copy}
Assume exact CFK replication and the two-system simulation condition of Remark~\ref{rem}: every instrument whose input and output each contain at most two single systems can be simulated exactly by a CFK instrument using finitely many copies of \(g\).
Then Capability~\ref{cap:universal-simulation} holds for arbitrary finite composite systems.
Moreover, exact replication reduces any such finite-copy simulation to an effective simulation from a single initial copy of \(g\).
\end{lemma}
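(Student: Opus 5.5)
The plan is to realize an arbitrary finite instrument as a finite circuit of elementary instruments that each act on at most two single systems. Each elementary piece is then replaced by its exact CFK simulator, and the simulators are glued together using the closure properties of Lemma~\ref{lem:SM-CFK-operation-closure}.

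\emph{Decomposition.} Let \(\{\mathcal N_j:A\to B\}_j\) be an instrument, with \(A\) consisting of \(a\) single systems and \(B\) of \(b\) single systems. Take a finite Stinespring\,/\,Naimark dilation. Append \(c\) fresh single systems prepared in a fixed pure state. Apply a unitary \(V\) on the \(a+c\) systems. Measure some single systems in a fixed basis, giving an outcome string \(x\). Discard the systems not belonging to \(B\). Finally coarse-grain \(x\mapsto j\).

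\emph{Elementary simulators.} Each step is covered by the two-system condition.
\begin{enumerate}
\item Preparation of one system is an instrument with trivial (zero-system) input and one-system output.
\item Any \(V\) decomposes exactly into one- and two-system unitaries, since the gate set of all two-qubit unitaries is exactly universal.
\item A single-system projective measurement is a one-system instrument.
\item Discarding a system is a one-system-to-trivial channel.
\end{enumerate}
By hypothesis, each elementary step \(\mathcal M^{(\ell)}\) admits a CFK instrument \(\Psi^{(\ell)}\) using \(k_\ell\) copies of \(g\).

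\emph{Gluing.} By reference extension, each \(\Psi^{(\ell)}\otimes\mathrm{id}\) is CFK on the full register together with the program systems. Sequential composition, retaining all outcomes, then gives a CFK instrument on \(A\otimes P^{\otimes k}\) with \(k=\sum_\ell k_\ell\). Feeding the state \(\rho\otimes g^{\otimes k}\) into this composite reproduces the circuit branchwise. This holds because each simulator acts exactly on its own block of program copies, and exactness holds for every input, including inputs entangled with the rest of the register. That inclusion follows from linearity: an identity of maps that holds on all density operators extends to the operator space and hence tensors with the identity. A final coarse-graining gives \(\Phi_j(\rho\otimes g^{\otimes k})=\mathcal N_j(\rho)\).

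\emph{Single-copy reduction.} Let \(\mathcal R\) be the replicator. Compose \(\mathcal R\) on the program register with \((\mathcal R\otimes\mathrm{id})\) repeatedly, \(k-1\) times. Each composition is CFK by reference extension and sequential composition, and the result maps \(g\mapsto g^{\otimes k}\). Prepending this to the simulator gives a CFK instrument \(\Phi'_j\) on \(A\otimes P\) with \(\Phi'_j(\rho\otimes g)=\mathcal N_j(\rho)\).

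\emph{Expected obstacle.} The main subtlety lies in the decomposition step, specifically in treating preparation as a two-system instrument with trivial input. One must check that "at most two single systems" includes the zero-system case. If it does not, the workaround is to simulate a one-system channel \(\rho\mapsto\ket{0}\!\bra{0}\) instead: discard-and-replace acting on an existing system, or on a system freshly prepared in a free state via the free-ancilla extension of Lemma~\ref{lem:SM-CFK-operation-closure}. The second point to confirm is that exactly simulating each gate on every input (not just product inputs) suffices when gates act inside larger registers, which is the linearity argument in the gluing step.
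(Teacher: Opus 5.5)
Your proposal is correct and follows essentially the same route as the paper's proof. Both use a finite Stinespring--Naimark circuit of preparations, one- and two-system unitaries, single-system measurements, discarding and coarse-graining, glued together via the closure properties of Lemma~\ref{lem:SM-CFK-operation-closure}, with the same free-ancilla-plus-replacement-channel workaround for preparations and repeated replication (earlier copies kept as references) for the one-copy reduction; your explicit linearity argument, that exactness on all density operators extends to inputs entangled with the rest of the register, is a correct detail the paper leaves implicit.
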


\begin{proof}
Every finite-dimensional quantum instrument has an exact Stinespring--Naimark realization: one appends finitely many ancillary single systems in fixed pure states, applies a finite-dimensional unitary, performs projective measurements on ancillary systems, discards auxiliary systems, and classically coarse-grains the fine-grained measurement outcomes.
The trivial system is counted as a zero-system input or output.

Any unitary on finitely many single systems admits an exact decomposition into finitely many one- and two-system unitaries.
The required pure-state preparations, single-system measurements, and single-system discarding maps are themselves instruments with at most two input and output systems.
Hence every elementary step has an exact CFK simulator by the two-system condition of Remark~\ref{rem}.
A fixed pure ancillary state can, if necessary, be prepared by first adjoining any free single-system state and then simulating the corresponding single-system replacement channel.

Lemma~\ref{lem:SM-CFK-operation-closure} allows these local simulators to act on selected systems with all remaining systems treated as untouched references, and shows that their finite sequential and tensor-product compositions remain CFK.
The final classical coarse-graining is CFK by the same lemma.
Since the circuit contains only finitely many elementary steps, only finitely many copies of \(g\) are required.

Finally, repeated exact replication generates any prescribed finite number of resource copies from one initial copy of \(g\).
At each stage the already generated copies may be kept as untouched references by the reference-extension property in Lemma~\ref{lem:SM-CFK-operation-closure}.
Composing this replication stage with the finite-copy simulator yields an effective exact CFK simulation from one initial copy of \(g\).
\end{proof}

Hence the two-system condition of Remark~\ref{rem}, together with exact replication and CFK compositionality, suffices for the full Capability~\ref{cap:universal-simulation}; throughout the classification proof we may therefore use arbitrary finite-system simulation and, when convenient, its effective one-copy form.

\subsection{CFK dynamics induced by the classified free states}

Theorem~\ref{thm:classification} of the main text classifies the free-state hierarchy.
We first characterize the completely free Kraus operators associated with the two classified hierarchies.
We then show that, in these two theories, the free ancilla in Definition~\eqref{eq:SM-CFK-operation} can always be absorbed into the Kraus representation.
Thus the new CFK operation class reduces, after classification, to the usual bare branchwise operation classes.

For \(k\ge1\), let \(P_k:=Z^{\otimes k}\).

\begin{proposition}[Completely free Kraus operators induced by the classified free states]
\label{prop:SM-induced-CFK-dynamics}
Suppose that the free-state hierarchy is one of the two structures appearing in Theorem~\ref{thm:classification}.
\begin{enumerate}
\item \emph{Imaginarity.}
If
\[
\cF_k=\{\rho:\rho^*=\rho\}
\qquad
\text{for every }k\ge1,
\]
then \(K:(\mathbb C^2)^{\otimes m}\to(\mathbb C^2)^{\otimes n}\) is a completely free Kraus operator if and only if
\begin{equation}
K=e^{i\theta}R
\label{eq:SM-induced-imag-CFK}
\end{equation}
for some \(\theta\in\mathbb R\) and real matrix \(R\).

\item \emph{Parity asymmetry.}
If
\[
\cF_k=\{\rho:[\rho,P_k]=0\}
\qquad
\text{for every }k\ge1,
\]
then \(K:(\mathbb C^2)^{\otimes m}\to(\mathbb C^2)^{\otimes n}\) is a completely free Kraus operator if and only if
\begin{equation}
P_nK=\epsilon KP_m,
\qquad
\epsilon\in\{+1,-1\}.
\label{eq:SM-induced-parity-CFK}
\end{equation}
\end{enumerate}
\end{proposition}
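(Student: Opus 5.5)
Both parts follow one template. The "if" direction checks directly that the symmetry commuting with free states is carried through \(K\otimes I_R\). The "only if" direction uses a maximally entangled free state on \(A R\) with \(R\cong A\) to turn \(K\) into its Choi operator. We then read off a constraint on \(K\) from membership of that Choi operator in \(\cC_{BR}\).

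\emph{Sufficiency.} In case (I), let \(K=e^{i\theta}R\) with \(R\) real, and let \(X\in\cC_{AR}\), so \(X^*=X\). Then \((K\otimes I)X(K\otimes I)^\dagger=(R\otimes I)X(R^T\otimes I)\), which is real and positive, hence lies in \(\cC_{BR}\). In case (P), suppose \(P_nK=\epsilon KP_m\). With \(R\) of \(r\) qubits, \(P_{m+r}=P_m\otimes P_r\) commutes with \(X\). Then
\[
(P_n\otimes P_r)(K\otimes I)X(K^\dagger\otimes I)(P_n\otimes P_r)=\epsilon^2(K\otimes I)(P_m\otimes P_r)X(P_m\otimes P_r)(K^\dagger\otimes I),
\]
which equals the original operator. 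So the output commutes with \(P_{n+r}\) and lies in \(\cC_{BR}\).

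\emph{Necessity.} Take \(R\) to be \(m\) qubits and \(\ket{\Phi^+}=\sum_x\ket{x}\ket{x}\) in the computational basis. This state is real, and it is invariant under \(P_m\otimes P_m\) because \(\bra{x}P_m\ket{x}^2=1\). Hence \(\Phi^+\) is free in both hierarchies. The CFK condition then gives that \(J:=(K\otimes I)\Phi^+(K^\dagger\otimes I)=\ket{k}\bra{k}\) is free, where \(\ket{k}=\sum_x K\ket{x}\otimes\ket{x}\) is the vectorization of \(K\).

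In case (I), \(J^*=J\) means \(\ket{k}\bra{k}=\ket{k^*}\bra{k^*}\). So \(\ket{k^*}=e^{-2i\theta}\ket{k}\), i.e. \(e^{-i\theta}\ket{k}\) is real. Equivalently \(e^{-i\theta}K\) is a real matrix \(R\).

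In case (P), \([J,P_n\otimes P_m]=0\) with \(J\) rank one means \(\ket{k}\) is an eigenvector of \(P_n\otimes P_m\), with eigenvalue \(\epsilon\in\{\pm1\}\). Now \((P_n\otimes P_m)\ket{k}\) is the vectorization of \(P_nKP_m^T=P_nKP_m\), since \(P_m\) is real diagonal. So \(P_nKP_m=\epsilon K\), i.e. \(P_nK=\epsilon KP_m\). The case \(K=0\) is trivial.

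\emph{Main obstacle.} The only delicate point is the identification step: that \(J\) being free pins down \(K\) up to exactly one phase or sign, rather than something weaker. The rank-one structure of \(J\) settles this. We should also note that allowing an arbitrary reference \(R\) is what makes the Choi argument available. Without \(R\), the single-system condition alone would not force this conclusion, consistent with the End Matter's RNG discussion.
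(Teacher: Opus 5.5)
Your proposal is correct and follows the paper's proof essentially line for line: you prove sufficiency by pushing the reality (resp.\ parity) symmetry through $K\otimes I_R$, and necessity by applying $K$ to the free maximally entangled state on $m+m$ qubits and reading off a global phase (resp.\ a sign $\epsilon$) from the resulting rank-one Choi operator in $\cC_{BR}$. The only cosmetic point is that $J$ lies in the free cone rather than being a normalized free state, which does not affect the argument.
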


\begin{proof}
\emph{Imaginarity.}
If \(K=e^{i\theta}R\) with \(R\) real, then for arbitrary reference \(Q\) and \(X\in\cC_{m+q}\),
\[
(K\otimes I_Q)X(K^\dagger\otimes I_Q)
=
(R\otimes I_Q)X(R^T\otimes I_Q)
\]
is positive and real, hence belongs to \(\cC_{n+q}\).
Thus \(K\) is completely free.

Conversely, let \(K\neq0\) be completely free, set \(d=2^m\), and take
\[
\ket{\Phi_d}
=
\frac1{\sqrt d}\sum_{x=0}^{d-1}\ket{x}\ket{x}.
\]
Since \(\ket{\Phi_d}\bra{\Phi_d}\in\cF_{2m}\),
\begin{equation}
(K\otimes I)\ket{\Phi_d}\bra{\Phi_d}(K^\dagger\otimes I)
=
\frac1d|K\rangle\!\rangle\langle\!\langle K|
\in\cC_{n+m}.
\label{eq:SM-induced-imag-Choi}
\end{equation}
The nonzero rank-one operator on the right is real, so its one-dimensional range is invariant under complex conjugation.
Hence \(e^{-i\theta}|K\rangle\!\rangle\) is real for some \(\theta\), and therefore \(e^{-i\theta}K\) is real.
The zero operator is trivial.

\emph{Parity asymmetry.}
If \(P_nK=\epsilon KP_m\), then for arbitrary reference \(Q\) and \(X\in\cC_{m+q}\),
\[
(P_n\otimes P_q)
(K\otimes I_Q)X(K^\dagger\otimes I_Q)
(P_n\otimes P_q)
=
(K\otimes I_Q)X(K^\dagger\otimes I_Q),
\]
so the output belongs to \(\cC_{n+q}\).

Conversely, for completely free \(K\neq0\), the same maximally entangled state satisfies
\[
(P_m\otimes P_m)\ket{\Phi_d}=\ket{\Phi_d},
\]
and therefore
\begin{equation}
\frac1d|K\rangle\!\rangle\langle\!\langle K|
\in\cC_{n+m}.
\label{eq:SM-induced-parity-Choi}
\end{equation}
This rank-one operator commutes with \(P_n\otimes P_m\), so
\[
(P_n\otimes P_m)|K\rangle\!\rangle
=
\epsilon|K\rangle\!\rangle
\]
for \(\epsilon=\pm1\).
Using
\[
(A\otimes B)|K\rangle\!\rangle=|AKB^T\rangle\!\rangle
\]
and \(P_m^T=P_m\) yields \(P_nK=\epsilon KP_m\).
\end{proof}

\begin{proposition}[Absorption of free ancillas in the classified theories]
\label{prop:SM-ancilla-absorption}
For both imaginarity and parity asymmetry, every CFK operation admits a Kraus representation consisting directly of completely free Kraus operators, without an explicit free ancilla.
The same holds simultaneously for all outcome maps of a CFK instrument.
\end{proposition}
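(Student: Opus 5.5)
The approach is to take a CFK operation $\Phi(X)=\widetilde\Phi(X\otimes\eta_E)$, with $\widetilde\Phi(Y)=\sum_\alpha K_\alpha Y K_\alpha^\dagger$ and every $K_\alpha:AE\to B$ completely free, and absorb the free ancilla directly into the Kraus operators. Write a decomposition of $\eta_E$ into rank-one pieces that are individually compatible with the symmetry, $\eta_E=\sum_\mu \ket{v_\mu}\!\bra{v_\mu}$. Then define
\[
L_{\alpha\mu}:=K_\alpha(I_A\otimes\ket{v_\mu}):A\to B,
\]
which gives $\Phi(X)=\sum_{\alpha,\mu}L_{\alpha\mu}XL_{\alpha\mu}^\dagger$. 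By Lemma~\ref{lem:SM-CFK-closure} (sequential composition), it then suffices to show that each embedding $V_\mu:=I_A\otimes\ket{v_\mu}$ is a completely free Kraus operator. By Proposition~\ref{prop:SM-induced-CFK-dynamics}, this reduces to a symmetry condition on the vectors $\ket{v_\mu}$: they should be real up to phase in case (I), or eigenvectors of $P_e$ in case (P). Indeed, $V_\mu$ then satisfies the corresponding Kraus criterion, since $I\otimes\ket{v}$ is real when $\ket v$ is real, and $P_{m+e}V_\mu=(P_m\otimes P_e)(I\otimes\ket{v_\mu})=\pm V_\mu P_m$.

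\emph{Imaginarity.} $\eta_E$ is a real symmetric positive matrix, so its spectral decomposition can be chosen with real orthonormal eigenvectors: $\eta_E=\sum_\mu\lambda_\mu\ket{v_\mu}\!\bra{v_\mu}$. Set the vectors to be $\sqrt{\lambda_\mu}\ket{v_\mu}$; these are real, as required.

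\emph{Parity asymmetry.} Here $[\eta_E,P_e]=0$, so $\eta_E=\Pi_+\eta_E\Pi_++\Pi_-\eta_E\Pi_-$, where $\Pi_\pm$ are the projectors onto the $\pm1$ eigenspaces of $P_e$. Diagonalize each block separately; this yields eigenvectors $\ket{v_\mu}$ with $P_e\ket{v_\mu}=\pm\ket{v_\mu}$, as required.

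For an instrument, all outcome maps share the same ancilla $\eta_E$, so the same decomposition $\{\ket{v_\mu}\}$ works for every outcome simultaneously, with Kraus operators $L_{j\alpha\mu}=K_{j\alpha}V_\mu$. Trace preservation of $\sum_j\Phi_j$ is unaffected, since the maps themselves are unchanged.

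The one step needing care is checking that $V_\mu$ is completely free directly, since Proposition~\ref{prop:SM-induced-CFK-dynamics} is stated only for maps between qubit tensor powers. Dimensions pose no obstacle there because $V_\mu:(\mathbb C^2)^{\otimes m}\to(\mathbb C^2)^{\otimes(m+e)}$. Still, the direct check is immediate. Conjugation by $V_\mu\otimes I_R$ maps $X$ to $X\otimes\ket{v_\mu}\!\bra{v_\mu}$ up to reordering. By product closure (C1) and permutation invariance (C3), this lies in the free cone, because $\ket{v_\mu}\!\bra{v_\mu}$ normalized is itself free in the classified hierarchy: it is real in case (I), and it commutes with $P_e$ in case (P).
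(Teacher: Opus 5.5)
Your proposal is correct and follows the paper's proof essentially step for step. You spectrally decompose $\eta_E$ into real (resp.\ parity-definite) eigenvectors, show each embedding $I_A\otimes\ket{v_\mu}$ is completely free via product closure, and compose with the dilation Kraus operators using Lemma~\ref{lem:SM-CFK-closure}; the common ancilla handles all outcomes of an instrument at once, and your appeal to permutation invariance for reordering $AER$ is a harmless refinement.
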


\begin{proof}
Let a CFK instrument have common free ancilla
\[
\eta_E=\sum_\ell p_\ell\ket{f_\ell}\bra{f_\ell}
\]
and completely free dilation Kraus operators \(K_{j\alpha}:AE\to B\).

For imaginarity, \(\eta_E\) is real and positive, so its eigenvectors can be chosen real.
Thus every \(\ket{f_\ell}\bra{f_\ell}\) is a free pure state.
For parity asymmetry, \([\eta_E,P_E]=0\), so \(\eta_E\) can be diagonalized separately in the two parity sectors; every eigenvector can therefore be chosen with definite parity and is again a free pure state.

For each \(\ell\), the preparation operator
\[
V_\ell:=I_A\otimes\ket{f_\ell}:A\to AE
\]
is a completely free Kraus operator, because for any reference \(R\) and \(X\in\cC_{AR}\),
\[
(V_\ell\otimes I_R)X(V_\ell^\dagger\otimes I_R)
=
X\otimes\ket{f_\ell}\bra{f_\ell}
\in\cC_{AER}
\]
by product closure.
Hence
\[
L_{j\alpha\ell}
:=
\sqrt{p_\ell}\,K_{j\alpha}V_\ell
\]
is completely free by Lemma~\ref{lem:SM-CFK-closure}, and
\[
\Phi_j(X)
=
\sum_{\alpha,\ell}
L_{j\alpha\ell}X L_{j\alpha\ell}^\dagger.
\]
Thus the free ancilla is absorbed into a direct completely free Kraus representation.
\end{proof}

\begin{proposition}[Universal simulation in the classified theories]
\label{prop:SM-full-universal-simulation}
Both imaginarity and parity asymmetry satisfy exact simulation of arbitrary finite-system instruments with one resource copy.
\end{proposition}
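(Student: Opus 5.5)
We construct, in each of the two theories, a single-copy CFK simulator for an arbitrary instrument $\{\mathcal N_j:A\to B\}_j$, following the template of Eq.~\eqref{eq:end-rng-sim} in the End Matter. By Proposition~\ref{prop:SM-induced-CFK-dynamics}, in imaginarity it suffices to produce Kraus operators that are real up to a phase. In parity asymmetry it suffices to produce Kraus operators satisfying $P_nK=\pm KP_m$. No ancilla is needed by Proposition~\ref{prop:SM-ancilla-absorption}, since the absorbed form is itself a CFK realization. We fix Kraus decompositions $\mathcal N_j(\rho)=\sum_\alpha M_{j\alpha}\rho M_{j\alpha}^\dagger$ with $\sum_{j\alpha}M_{j\alpha}^\dagger M_{j\alpha}=I_A$.

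\emph{Imaginarity.} We take $g=\ket{+i}\bra{+i}$, set $V_\pm=I_A\otimes\ket{\pm i}$, and use exactly the operators $K^{(0)}_{j\alpha},K^{(1)}_{j\alpha}$ of Eq.~\eqref{eq:end-rng-sim}. The steps are as follows.
\begin{enumerate}
\item Reality. $K^{(0)}$ is real because $\overline{M V_+^\dagger}=M^*V_-^\dagger$. Likewise $K^{(1)}=\tfrac{i}{\sqrt2}(W-\overline W)$ with $W=MV_+^\dagger$, which is $-\sqrt2\,\mathrm{Im}\,W$ and hence real.
\item Completeness. We expand $\sum_{j\alpha,b}K^{(b)\dagger}K^{(b)}$. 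The cross terms cancel between $b=0$ and $b=1$, leaving $\sum(V_+M^\dagger MV_+^\dagger+V_-M^{T}M^*V_-^\dagger)=V_+V_+^\dagger+V_-V_-^\dagger=I_{AP}$.
\item Action on the program. Since $V_-^\dagger(\rho\otimes g)=0$ and $V_+^\dagger(\rho\otimes g)V_+=\rho$, we obtain $\sum_b K^{(b)}(\rho\otimes g)K^{(b)\dagger}=M\rho M^\dagger$. Summing over $\alpha$ gives $\mathcal N_j(\rho)$.
\end{enumerate}
Thus the instrument $\{\Phi_j\}$, with the $b$-outcomes coarse-grained, is CFK and exact with one copy. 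This holds for arbitrary finite $A,B$.

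\emph{Parity asymmetry.} After the $Y$-rotation used in the classification, the resource is $g=\ket{+}\bra{+}$ in the $X$ basis, so that $ZgZ=g^\perp$. The idea is the same, with parity conjugation playing the role of complex conjugation. For an operator $M:A\to B$ we define its parity image $\tilde M:=P_BMP_A$.
\begin{enumerate}
\item Construction. With $V_\pm=I_A\otimes\ket{\pm}$ we set
\[
K^{(0)}=\tfrac1{\sqrt2}\bigl(MV_+^\dagger+\tilde MV_-^\dagger\bigr),\qquad
K^{(1)}=\tfrac1{\sqrt2}\bigl(MV_+^\dagger-\tilde MV_-^\dagger\bigr).
\]
\item Parity covariance. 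Because $Z\ket{\pm}=\ket{\mp}$, we have $V_\pm^\dagger (P_A\otimes Z)=P_AV_\mp^\dagger$. It follows that $P_BK^{(0)}(P_A\otimes Z)=K^{(0)}$ and $P_BK^{(1)}(P_A\otimes Z)=-K^{(1)}$, so both satisfy Eq.~\eqref{eq:SM-induced-parity-CFK}.
\item Completeness and action. These follow exactly as in the imaginarity case, using $\tilde M^\dagger\tilde M=P_AM^\dagger MP_A$ and summing to $P_AIP_A=I$.
\end{enumerate}
Since $V_-^\dagger(\rho\otimes g)=0$, the output is again $\mathcal N_j(\rho)$.

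\emph{Main obstacle.} The principal care point is the parity bookkeeping, namely checking that the sign $\epsilon$ is uniform within each Kraus operator, and the cancellation of cross terms in the completeness sum. Both are short algebraic identities. A secondary point is confirming that outputs involving no qubits, such as trivial $B$, are covered. Here one interprets $P$ on a trivial system as $1$, and the same formulas apply unchanged.
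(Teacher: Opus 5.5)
Your proposal is correct and follows the paper's proof essentially verbatim: you use the same real Kraus pair $K^{(0)},K^{(1)}$ for imaginarity and the same parity-symmetrized pair $\frac{1}{\sqrt2}(MV_+^\dagger\pm P_BMP_AV_-^\dagger)$ for parity asymmetry, with complete freeness certified by Proposition~\ref{prop:SM-induced-CFK-dynamics} and completeness coming from cancellation of the cross terms. One small correction: in the paper's convention the rotation about the $Y$ axis leaves $g=(I+Y)/2=\ket{+i}\bra{+i}$ fixed rather than turning it into $\ket{+}\bra{+}$; since your parity argument only uses $Z\ket{\pm}=\ket{\mp}$, which equally holds for $\ket{\pm i}$, it goes through unchanged for the designated resource.
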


\begin{proof}
Let \(\{\mathcal N_j:A\to B\}_j\) be an arbitrary instrument on finite-qubit systems, with Kraus operators
\[
\mathcal N_j(\rho)
=
\sum_\alpha M_{j\alpha}\rho M_{j\alpha}^\dagger,
\qquad
\sum_{j,\alpha}M_{j\alpha}^\dagger M_{j\alpha}=I_A.
\]
Choose the resource basis so that
\[
\ket{\gamma}
=
\frac{\ket0+i\ket1}{\sqrt2},
\qquad
\ket{\gamma^\perp}=\ket{\gamma}^*,
\]
and define
\[
V_+:=I_A\otimes\ket{\gamma},
\qquad
V_-:=I_A\otimes\ket{\gamma^\perp}.
\]

\emph{Imaginarity.}
For every \(j,\alpha\), define
\begin{align}
K_{j\alpha}^{(0)}
&:=
\frac1{\sqrt2}
\left(
M_{j\alpha}V_+^\dagger
+
M_{j\alpha}^*V_-^\dagger
\right),
\\
K_{j\alpha}^{(1)}
&:=
\frac{i}{\sqrt2}
\left(
M_{j\alpha}V_+^\dagger
-
M_{j\alpha}^*V_-^\dagger
\right).
\end{align}
Both are real and hence completely free by Proposition~\ref{prop:SM-induced-CFK-dynamics}.
Moreover,
\[
K_{j\alpha}^{(0)}V_+
=
\frac1{\sqrt2}M_{j\alpha},
\qquad
K_{j\alpha}^{(1)}V_+
=
\frac{i}{\sqrt2}M_{j\alpha},
\]
so the corresponding outcome map simulates \(\mathcal N_j\) exactly on resource input \(g\).
The cross terms cancel and
\[
\sum_{j,\alpha,s}
K_{j\alpha}^{(s)\dagger}K_{j\alpha}^{(s)}
=
I_{A\otimes P},
\]
so these Kraus operators form a bare CFK instrument, hence a CFK instrument.

\emph{Parity asymmetry.}
Let \(P_A\) and \(P_B\) denote the total parity operators on \(A\) and \(B\), and define
\[
L_{j\alpha}^{(\pm)}
:=
\frac1{\sqrt2}
\left(
M_{j\alpha}V_+^\dagger
\pm
P_BM_{j\alpha}P_AV_-^\dagger
\right).
\]
Since \(Z\ket{\gamma}=\ket{\gamma^\perp}\),
\[
P_BL_{j\alpha}^{(\pm)}
=
\pm
L_{j\alpha}^{(\pm)}(P_A\otimes Z),
\]
so these are completely free by Proposition~\ref{prop:SM-induced-CFK-dynamics}.
Also,
\[
L_{j\alpha}^{(+)}V_+
=
L_{j\alpha}^{(-)}V_+
=
\frac1{\sqrt2}M_{j\alpha},
\]
and
\[
\sum_{j,\alpha,\pm}
L_{j\alpha}^{(\pm)\dagger}L_{j\alpha}^{(\pm)}
=
I_{A\otimes P}.
\]
Thus they form a bare CFK instrument that exactly simulates the target instrument.
\end{proof}

By Proposition~\ref{prop:SM-ancilla-absorption}, the CFK dynamics of the two classified theories coincide with their usual branchwise descriptions: real Kraus representations for imaginarity and \(\mathbb Z_2\)-homogeneous Kraus representations \(P_nK=\pm KP_m\) for parity asymmetry.

\section{Overlap rigidity}

By Lemma~\ref{lem:SM-two-local-to-full}, the two operational capabilities assumed in the main text imply exact simulation of arbitrary finite-system instruments.
We therefore use this derived full simulation property throughout the classification proof.
All uses of a free operation on part of a larger free system are justified by the complete free-cone preservation and reference-extension properties in Lemma~\ref{lem:SM-CFK-operation-closure}.

Recall
\begin{equation}
s_g:=\max_{\sigma\in\cF_1}\Tr(g\sigma),
\qquad
m_g:=\min_{\sigma\in\cF_1}\Tr(g\sigma).
\label{eq:SM-minmax-overlap-new}
\end{equation}
The local non-triviality assumption in the main text implies $0<s_g<1$.
Since $\cF_1$ is compact, both extrema are attained.
Let
\[
g=\ket\gamma\bra\gamma,
\qquad
g^\perp=\ket{\gamma^\perp}\bra{\gamma^\perp}.
\]

\subsection{Replication-induced structure}

Let $\mathcal R:P\to AB$ be a fixed exact CFK replicator,
\begin{equation}
\mathcal R(g)=g\otimes g.
\label{eq:SM-replicator-new}
\end{equation}

\begin{lemma}[Replication-induced structure]
\label{lem:SM-replication-structure}
The following properties hold.
\begin{enumerate}
\item For every finite-system pure state $h$, there exists a CFK channel $\Lambda_h$ such that
\begin{equation}
\Lambda_h(g)=h.
\label{eq:SM-pure-conversion-new}
\end{equation}

\item The marginal channels $T_A:=\Tr_B\circ\mathcal R$ and $T_B:=\Tr_A\circ\mathcal R$ satisfy
\begin{equation}
\Tr[gT_A(\rho)]
=
\Tr[gT_B(\rho)]
=
\Tr(g\rho)
\label{eq:SM-marginal-invariance-new}
\end{equation}
for every qubit state $\rho$.

\item The same replicator also satisfies
\begin{equation}
\mathcal R(g^\perp)=g^\perp\otimes g^\perp.
\label{eq:SM-orthogonal-replication-new}
\end{equation}

\item For every $\sigma\in\cF_1$, writing $p_\sigma:=\Tr(g\sigma)$,
\begin{equation}
p_\sigma g+(1-p_\sigma)g^\perp\in\cF_1.
\label{eq:SM-dephasing-new}
\end{equation}
In particular, if $\sigma_{\max}$ and $\sigma_{\min}$ attain $s_g$ and $m_g$, then
\begin{equation}
\tau_{\max}:=s_g g+(1-s_g)g^\perp\in\cF_1,
\qquad
\tau_{\min}:=m_g g+(1-m_g)g^\perp\in\cF_1,
\label{eq:SM-tau-extrema-new}
\end{equation}
and
\begin{equation}
\Omega_{\max}
:=
s_g g\otimes g+(1-s_g)g^\perp\otimes g^\perp
\in\cF_2.
\label{eq:SM-Omega-max-new}
\end{equation}
\end{enumerate}
\end{lemma}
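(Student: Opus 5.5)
The plan is to prove the four items in the order (1), (2), (3), (4), working with linear functionals of the replicator. The only inputs are: the Stinespring form of the CFK channel \(\mathcal R\), complete free-cone preservation from Lemma~\ref{lem:SM-CFK-operation-closure}, marginal consistency (C2), and the bound \(0<s_g<1\).

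\emph{Item 1.} I would apply Capability~\ref{cap:universal-simulation} to the one-outcome instrument with trivial input that prepares \(h\). This gives a CFK channel \(\Phi\) with \(\Phi(g^{\otimes k})=h\). By the one-copy reduction of Lemma~\ref{lem:SM-two-local-to-full}, I would precompose \(\Phi\) with \(k-1\) rounds of replication, keeping already produced copies as untouched references. Closure under composition (Lemma~\ref{lem:SM-CFK-operation-closure}) makes \(\Lambda_h\) CFK.

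\emph{Item 2.} This is the step that really uses the free structure, and I expect it to be the main obstacle. Define the effect \(N_A\) by \(\Tr[gT_A(\rho)]=\Tr(N_A\rho)\), so \(0\le N_A\le I\). Since \(T_A(g)=g\), we have \(\Tr(N_Ag)=1\). Together with \(N_A\le I\), this forces \(N_A=g+a\,g^\perp\) for some \(a\in[0,1]\).

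Now take \(\sigma_{\max}\in\cF_1\) attaining \(s_g\). Then \(\mathcal R(\sigma_{\max})\in\cF_2\) by free-cone preservation, so its marginal \(T_A(\sigma_{\max})\) lies in \(\cF_1\) by (C2). Hence
\[
s_g+a(1-s_g)=\Tr[gT_A(\sigma_{\max})]\le s_g .
\]
Since \(s_g<1\), this gives \(a=0\), i.e. \(N_A=g\). The same argument applies to \(T_B\).

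\emph{Item 3.} From item 2, \(\Tr[gT_A(g^\perp)]=0\). Because the qubit orthocomplement of \(\ket\gamma\) is one-dimensional, this means \(T_A(g^\perp)=g^\perp\), and likewise \(T_B(g^\perp)=g^\perp\). A bipartite state whose two marginals are both pure must be their product, so \(\mathcal R(g^\perp)=g^\perp\otimes g^\perp\).

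\emph{Item 4.} Let \(V\) be a Stinespring isometry of \(\mathcal R\), including the free ancilla. Pure outputs force
\[
V\ket\gamma=\ket{\gamma\gamma}\ket{\phi},\qquad V\ket{\gamma^\perp}=\ket{\gamma^\perp\gamma^\perp}\ket{\phi'} .
\]
Therefore \(\mathcal R(\ket\gamma\bra{\gamma^\perp})\) is proportional to \(\ket\gamma\bra{\gamma^\perp}\otimes\ket\gamma\bra{\gamma^\perp}\) (after tracing out the environment), and its partial trace over either factor vanishes because \(\langle\gamma^\perp|\gamma\rangle=0\).

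It follows that, for \(\sigma\in\cF_1\),
\[
T_A(\sigma)=p_\sigma g+(1-p_\sigma)g^\perp .
\]
This state lies in \(\cF_1\) by free-cone preservation and (C2), which proves Eq.~\eqref{eq:SM-dephasing-new}. Specializing to \(\sigma_{\max}\) and \(\sigma_{\min}\) gives \(\tau_{\max}\) and \(\tau_{\min}\). Finally, by linearity together with items 3 and the replication identity,
\[
\mathcal R(\tau_{\max})=s_g\,g\otimes g+(1-s_g)\,g^\perp\otimes g^\perp=\Omega_{\max},
\]
and \(\Omega_{\max}\in\cF_2\) because \(\mathcal R\) preserves free states.
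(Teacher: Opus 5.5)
Your proof is correct. Items 1--3 follow the paper's argument essentially step by step:
\begin{itemize}
\item For item 1, the paper also simulates the replacement channel $\rho\mapsto\Tr(\rho)h$ in its effective one-copy form. It adjoins an arbitrary free input $\sigma_0$ rather than using a trivial input, which is immaterial.
\item For item 2, it pins down the effect $T_X^\dagger(g)=g+r_X\,g^\perp$ using $\sigma_{\max}$ and marginal consistency, exactly as you do.
\item For item 3, it deduces $\mathcal R(g^\perp)=g^\perp\otimes g^\perp$ from the purity of both marginals.
\end{itemize}

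The only genuine difference is in item 4.
\begin{itemize}
\item The paper uses fidelity monotonicity: $p_\sigma=F(g,\sigma)\le\Tr[(g\otimes g)\mathcal R(\sigma)]\le\Tr[(g\otimes I)\mathcal R(\sigma)]=p_\sigma$, and the same with $I\otimes g$. Hence $\mathcal R(\sigma)$ has no weight on $g\otimes g^\perp$ or $g^\perp\otimes g$, and positivity removes those rows and columns. This leaves marginals $p_\sigma g+(1-p_\sigma)g^\perp$, using only item 2.
\item You instead combine item 3 with a Stinespring isometry to get $\mathcal R(\ket\gamma\bra{\gamma^\perp})=\langle\phi'|\phi\rangle\,\ket{\gamma\gamma}\bra{\gamma^\perp\gamma^\perp}$, whose marginals vanish. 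Hence $T_A$ and $T_B$ are exactly the completely dephasing channel in the resource basis on every input.
\end{itemize}
Both routes are equally short. Yours describes $\mathcal R$ more explicitly, fixed up to the single coherence parameter $\langle\phi'|\phi\rangle$. The paper's avoids the dilation and does not need item 3 for the dephasing statement; it uses item 3 only afterwards, to identify $\mathcal R(\tau_{\max})=\Omega_{\max}$.
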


\begin{proof}
For the first statement, fix any pure $h$ and consider the replacement channel
$\mathcal N_h(\rho)=\Tr(\rho)h$.
By effective one-copy simulation, there is a CFK channel
$\Psi_h:A\otimes P\to B$ satisfying
$\Psi_h(\rho\otimes g)=h$ for every state $\rho$.
Choose any free state $\sigma_0\in\cF_A$ and define
$\Lambda_h(X):=\Psi_h(\sigma_0\otimes X)$.
Since adjoining a free ancilla is included in the CFK definition, $\Lambda_h$ is CFK and $\Lambda_h(g)=h$.

For the second statement, $T_A(g)=T_B(g)=g$.
For $X=A,B$, put $E_X:=T_X^\dagger(g)$.
Since $0\le E_X\le I$ and $\Tr(E_Xg)=1$, positivity gives
\[
E_X=g+r_Xg^\perp,
\qquad
0\le r_X\le1.
\]
Let $\sigma_{\max}\in\cF_1$ attain $s_g$.
Because $\mathcal R(\sigma_{\max})$ is free and marginals of free states are free,
\[
\Tr[gT_X(\sigma_{\max})]
=s_g+r_X(1-s_g)
\le s_g.
\]
Since $s_g<1$, $r_X=0$, and Eq.~\eqref{eq:SM-marginal-invariance-new} follows.

Applying Eq.~\eqref{eq:SM-marginal-invariance-new} to $g^\perp$ shows that both marginals of
$\mathcal R(g^\perp)$ equal the pure state $g^\perp$; hence Eq.~\eqref{eq:SM-orthogonal-replication-new} follows.

Finally, let $\Omega_\sigma:=\mathcal R(\sigma)$ for $\sigma\in\cF_1$.
Fidelity monotonicity and Eq.~\eqref{eq:SM-marginal-invariance-new} give
\[
p_\sigma
=F(g,\sigma)
\le
\Tr[(g\otimes g)\Omega_\sigma]
\le
\Tr[(g\otimes I)\Omega_\sigma]
=p_\sigma.
\]
Thus equality holds throughout.
In the product resource basis, positivity then forces the $g\otimes g^\perp$ and $g^\perp\otimes g$ rows and columns of $\Omega_\sigma$ to vanish.
Consequently both marginals are
$p_\sigma g+(1-p_\sigma)g^\perp$, which is free by marginal consistency.
Applying this to $\sigma_{\max}$ and $\sigma_{\min}$ gives Eq.~\eqref{eq:SM-tau-extrema-new}, while linearity of $\mathcal R$ and Eq.~\eqref{eq:SM-orthogonal-replication-new} give Eq.~\eqref{eq:SM-Omega-max-new}.
\end{proof}

\subsection{Proof of overlap rigidity}

\begin{lemma}[Overlap rigidity]
\label{lem:SM-overlap-rigidity}
Every free qubit state $\sigma\in\cF_1$ satisfies
\begin{equation}
\Tr(g\sigma)=\Tr(g^\perp\sigma)=\frac12,
\label{eq:SM-overlap-rigidity}
\end{equation}
and
\begin{equation}
\frac I2\in\cF_1.
\label{eq:SM-maxmix-free-new}
\end{equation}
\end{lemma}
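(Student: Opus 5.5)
The plan is to pin down the two extremal overlaps \(s_g\) and \(m_g\) of Eq.~\eqref{eq:SM-minmax-overlap-new}. I will show first that \(m_g = 1-s_g\), and then that \(s_g \le 1/2\). Since \(m_g\le s_g\), these together force \(s_g=m_g=1/2\), which is Eq.~\eqref{eq:SM-overlap-rigidity}. The statement \(I/2\in\cF_1\) then follows from Lemma~\ref{lem:SM-replication-structure}(4): with \(p_\sigma=1/2\), the dephased state \(\tfrac12 g+\tfrac12 g^\perp=I/2\) is free. Throughout I use full finite-system simulation via Lemma~\ref{lem:SM-two-local-to-full}, and reference extension and complete free-cone preservation from Lemma~\ref{lem:SM-CFK-operation-closure}.

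\emph{Step 1 (\(m_g\le 1-s_g\)).} By Lemma~\ref{lem:SM-replication-structure}(1) there is a CFK channel \(\Lambda\) with \(\Lambda(g)=g^\perp\). As in the proof of part (2) of that lemma, \(\Lambda^\dagger(g^\perp)=g+r\,g^\perp\) with \(0\le r\le 1\). Applying \(\Lambda\) to \(\sigma_{\max}\) gives a free state with \(\Tr[g^\perp\Lambda(\sigma_{\max})]\ge s_g\), hence with \(g\)-overlap at most \(1-s_g\). Therefore \(m_g\le 1-s_g\).

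\emph{Step 2 (\(m_g\ge 1-s_g\)).} I would like the mirror statement, but here \(g\) and \(g^\perp\) are not symmetric, since only \(g\) is available as a resource. The plan is to use the simulated unitary \(U\) with \(UgU^\dagger=g^\perp\) (and \(Ug^\perp U^\dagger=g\)), realized by a CFK channel \(\Phi_U\) with \(\Phi_U(\rho\otimes g)=U\rho U^\dagger\). I would feed \(\Phi_U\) a \emph{free} program drawn from the free two-qubit state \(\Omega_{\max}\) of Eq.~\eqref{eq:SM-Omega-max-new}: the second qubit of \(\Omega_{\max}\) is the program and the first is kept as an untouched reference. Complete free-cone preservation makes
\[
s_g\,U\sigma U^\dagger\otimes g+(1-s_g)\,\Phi_U(\sigma\otimes g^\perp)\otimes g^\perp
\]
free for every \(\sigma\in\cF_1\). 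Its marginal is then free, and its \(g\)-overlap is bounded above by \(s_g\) and below by \(m_g\). Taking \(\sigma=\sigma_{\min}\) gives the inequality
\[
s_g(1-m_g)+(1-s_g)\,x\le s_g,
\]
with \(x\ge 0\) the \(g\)-overlap of \(\Phi_U(\sigma_{\min}\otimes g^\perp)\). To turn this into a useful bound I would use the simulated projective measurement in the basis \(\{g,g^\perp\}\), whose branches are individually completely free, to control \(\Phi_U(\cdot\otimes g^\perp)\) from below, for instance by composing with the replicator, which preserves \(g^\perp\) by Lemma~\ref{lem:SM-replication-structure}(3).

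\emph{Step 3 (\(s_g\le 1/2\)).} Here the \(\Omega_{\max}\)-as-program trick is run on \(\sigma_{\max}\) together with the measurement-branch simulator. The aim is an inequality of the shape \(s_g\le s_g(1-s_g)+(\text{terms controlled by Step 2})\), and from it \(s_g\le 1/2\).

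The main obstacle, in both Step 2 and Step 3, is controlling the uncontrolled term \(\Phi_U(\sigma\otimes g^\perp)\). The capability fixes the simulator only on the program \(g\), while free inputs necessarily mix in \(g^\perp\) (every free state has nonzero \(g^\perp\)-weight, since \(s_g<1\)). My first attempt will be to choose the simulator for the measurement-plus-swap instrument directly, so that each outcome branch is CFK and the \(g^\perp\)-program contribution can be bounded branchwise by \(s_g\) and \(m_g\) through marginal consistency.
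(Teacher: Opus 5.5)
Your Step~1 is correct and is a neat one-line observation. Push $\sigma_{\max}$ through a CFK channel $\Lambda$ with $\Lambda(g)=g^\perp$, and use $\Lambda^\dagger(g^\perp)=g+r\,g^\perp$; this gives $m_g+s_g\le 1$. Steps~2 and~3, however, state intentions rather than give arguments, and the obstacle you name there is the substance of the lemma. The inequality you derive in Step~2, $s_g(1-m_g)+(1-s_g)x\le s_g$, is an \emph{upper} bound on $x=\Tr[g\,\Phi_U(\sigma_{\min}\otimes g^\perp)]$. To get $m_g\ge 1-s_g$ from it you would need $x\ge s_g$, and nothing you invoke forces that.

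More fundamentally, overlap bookkeeping with the swap simulator cannot close the argument on its own. Suppose $\Phi_U$ is the program-independent map $\rho\otimes\eta\mapsto U\rho U^\dagger\Tr\eta$. It sends every free input to a state whose $g$-overlap lies in $[1-s_g,1-m_g]$, and this equals $[m_g,s_g]$ for any range symmetric about $1/2$, e.g.\ $[0.4,0.6]$. So without further input, the swap, like your Step~1, only relates the overlap range to its mirror image under $p\mapsto 1-p$. Some other instrument must first collapse $[m_g,s_g]$ to a single value, and your proposal does not supply one.

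The paper does this with the simulated L\"uders instrument $\{g\rho g,\ g^\perp\rho g^\perp\}$. It does not bound the program-$g^\perp$ contributions from below; it carries them as unknowns.
\begin{itemize}
\item For the $g$-branch $\Phi_0$, exact simulation fixes $\Phi_0(g\otimes g)=g$ and $\Phi_0(g^\perp\otimes g)=0$. The images $Y:=\Phi_0(g\otimes g^\perp)$ and $Z:=\Phi_0(g^\perp\otimes g^\perp)$ are positive with trace at most $1$.
\item The free inputs $\Omega_{\max}$ and $\tau_x^{\otimes2}$ are diagonal in the product resource basis, so their outputs are affine in $Y$ and $Z$.
\item Testing $\Omega_{\max}$ against $\Tr(gW)\le s_g\Tr W$ forces $Z=g^\perp$.
\item Testing $\tau_{\max}^{\otimes2}$ and $\tau_{\min}^{\otimes2}$ against the upper and lower overlap bounds gives $1-2m_g+m_g\Tr Y\le\Tr(gY)\le 1-2s_g+s_g\Tr Y$. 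Hence $(s_g-m_g)(\Tr Y-2)\ge0$, and so $m_g=s_g$.
\end{itemize}
The lower bound requires dividing by $m_g$, so the paper first rules out $m_g=0$ separately. It replicates one leg of $\Omega_{\max}$ to get the free state $\Gamma_{\max}$, deduces that the $g^\perp$-branch annihilates $g^\perp\otimes g^\perp$, and applies that branch to $g^\perp\otimes\Omega_{\max}$ to force $g\in\cF_1$.

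Only once every free state has overlap exactly $s_g$ does the swap become decisive. Take the analogous $Y:=\Phi_U(g\otimes g^\perp)$ and $Z:=\Phi_U(g^\perp\otimes g^\perp)$, which now have unit trace by trace preservation. The overlap equalities for $\Phi_U(\Omega_{\max})$ and $\Phi_U(\tau_{\max}^{\otimes2})$ give $\Tr(gZ)=s_g/(1-s_g)\le1$ and $\Tr(gY)=(2s_g-1)/(1-s_g)\ge0$.

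Your Step~1 fits into this route. After $m_g=s_g$ it gives $s_g\le1/2$ at once, replacing the $\Tr(gZ)$ bound. Your Step~2 target then becomes equivalent to $s_g\ge1/2$, which is exactly the $\Tr(gY)\ge0$ computation.
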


\begin{proof}
We first show $m_g>0$.
Assume instead $m_g=0$.
Equation~\eqref{eq:SM-dephasing-new} then gives $g^\perp\in\cF_1$.
Consider the L\"uders instrument in the resource basis,
\[
\mathcal N_0(\rho)=g\rho g,
\qquad
\mathcal N_1(\rho)=g^\perp\rho g^\perp,
\]
and let $\{\Psi_0,\Psi_1\}$ be a one-copy CFK simulation.

Apply $\mathcal R$ to one subsystem of the free state $\Omega_{\max}$ in Eq.~\eqref{eq:SM-Omega-max-new}.
By Lemma~\ref{lem:SM-complete-preservation}, with the other subsystem kept as a reference, the state
\[
\Gamma_{\max}
=
s_g g_A\otimes g_P\otimes g_R
+(1-s_g)g_A^\perp\otimes g_P^\perp\otimes g_R^\perp
\]
is free.
Let
\[
X:=\Psi_0(g^\perp\otimes g^\perp),
\qquad
q:=\Tr X\le1.
\]
Applying $\Psi_0$ to $AP$ and leaving $R$ untouched gives, again by Lemma~\ref{lem:SM-complete-preservation},
\[
s_g g_B\otimes g_R
+(1-s_g)X_B\otimes g_R^\perp
\in\cC_{BR}.
\]
Taking the $R$ marginal and normalizing, the maximality of $s_g$ yields
\[
\frac{s_g}{s_g+(1-s_g)q}\le s_g.
\]
Since $0<s_g<1$ and $q\le1$, this forces $q=1$.
As $\Psi_0+\Psi_1$ is trace preserving,
$\Psi_1(g^\perp\otimes g^\perp)=0$.

Now $g_A^\perp\otimes\Omega_{\max}^{PR}$ is free by product closure.
Applying $\Psi_1$ to $AP$ while keeping $R$ as reference gives
\[
s_g g_B^\perp\otimes g_R\in\cC_{BR},
\]
where exact simulation was used on the program-$g$ branch and the program-$g^\perp$ branch vanishes by the preceding paragraph.
Taking the $B$ marginal gives $s_g g_R\in\cC_R$, hence $g\in\cF_1$, a contradiction.
Therefore $m_g>0$.

We next show $m_g=s_g$.
Let $\Phi_0$ be the $g$-outcome map of a one-copy simulation of the same L\"uders instrument, and choose an arbitrary Kraus representation
$\Phi_0(X)=\sum_\alpha K_\alpha X K_\alpha^\dagger$.
Let
\[
J_g\ket\varphi=\ket\varphi\otimes\ket\gamma.
\]
Because the restriction of $\Phi_0$ to the program-$g$ subspace is the rank-one map $\rho\mapsto g\rho g$, every Kraus operator satisfies
$K_\alpha J_g=c_\alpha g$ with $\sum_\alpha|c_\alpha|^2=1$.
In the ordered product resource basis, write
\[
K_\alpha=
\begin{pmatrix}
c_\alpha&a_\alpha&0&u_\alpha\\
0&b_\alpha&0&v_\alpha
\end{pmatrix}.
\]
Define
\[
Y:=\sum_\alpha
\binom{a_\alpha}{b_\alpha}
(\bar a_\alpha\ \bar b_\alpha),
\qquad
Z:=\sum_\alpha
\binom{u_\alpha}{v_\alpha}
(\bar u_\alpha\ \bar v_\alpha).
\]
Trace nonincrease gives $\Tr Y,\Tr Z\le1$.
For every $W\in\cC_1$,
\begin{equation}
m_g\Tr W\le\Tr(gW)\le s_g\Tr W.
\label{eq:SM-overlap-bounds-new}
\end{equation}

Applying $\Phi_0$ to $\Omega_{\max}$ gives the free-cone element
$s_g g+(1-s_g)Z$.
The upper bound in Eq.~\eqref{eq:SM-overlap-bounds-new} yields
\[
Z_{00}\le s_g(\Tr Z-1).
\]
The left-hand side is nonnegative and the right-hand side nonpositive, hence
$Z_{00}=0$ and $\Tr Z=1$.
Positivity of $Z$ therefore gives
\begin{equation}
Z=g^\perp.
\label{eq:SM-Z-new}
\end{equation}

For $\tau_x:=xg+(1-x)g^\perp$,
\[
\Phi_0(\tau_x^{\otimes2})
=
x^2g+x(1-x)Y+(1-x)^2g^\perp.
\]
Because $\tau_{\max}^{\otimes2}$ and $\tau_{\min}^{\otimes2}$ are free, applying respectively the upper and lower bounds in Eq.~\eqref{eq:SM-overlap-bounds-new} gives
\[
Y_{00}\le s_g\Tr Y+1-2s_g,
\qquad
Y_{00}\ge m_g\Tr Y+1-2m_g.
\]
Hence
\[
(s_g-m_g)(\Tr Y-2)\ge0.
\]
Since $s_g\ge m_g$ and $\Tr Y\le1$, we conclude
\begin{equation}
m_g=s_g.
\label{eq:SM-minmax-equal-new}
\end{equation}
Thus every free state has the same overlap $s_g$ with $g$.

Finally, let
\[
U=\ket{\gamma^\perp}\bra\gamma+\ket\gamma\bra{\gamma^\perp}
\]
and let $\Phi_U$ be a one-copy CFK simulation of $\operatorname{Ad}_U$.
Choose any Kraus representation $\Phi_U(X)=\sum_\alpha L_\alpha XL_\alpha^\dagger$.
Exact simulation on the program-$g$ subspace implies
$L_\alpha J_g=c_\alpha U$ and $\sum_\alpha|c_\alpha|^2=1$.
Thus
\[
L_\alpha=
\begin{pmatrix}
0&a_\alpha&c_\alpha&u_\alpha\\
c_\alpha&b_\alpha&0&v_\alpha
\end{pmatrix}.
\]
Define $Y$ and $Z$ from the second and fourth columns as above.
Trace preservation gives $Y,Z\ge0$ and $\Tr Y=\Tr Z=1$.

Applying $\Phi_U$ to $\Omega_{\max}$ gives the normalized free state
$s_gg^\perp+(1-s_g)Z$.
Its overlap with $g$ must equal $s_g$, hence
\[
Z_{00}=\frac{s_g}{1-s_g}\le1,
\]
so $s_g\le1/2$.
Applying $\Phi_U$ to $\tau_{\max}^{\otimes2}$ and again using the fixed overlap gives
\[
Y_{00}=\frac{2s_g-1}{1-s_g}\ge0,
\]
so $s_g\ge1/2$.
Therefore
\[
m_g=s_g=\frac12,
\]
which proves Eq.~\eqref{eq:SM-overlap-rigidity}.
Equation~\eqref{eq:SM-dephasing-new} then gives $I/2\in\cF_1$.
\end{proof}

\section{Complete finite-qubit classification}

We now prove Theorem~\ref{thm:classification} of the main text.
Choose a common local basis such that
\begin{equation}
g=\frac12(I+Y),
\qquad
g^\perp=\frac12(I-Y).
\label{eq:SM-gY-new}
\end{equation}
By Lemma~\ref{lem:SM-overlap-rigidity}, every free qubit state has vanishing $Y$ component and $I/2$ is free.
Define
\begin{equation}
S:=\spanR\{\sigma-I/2:\sigma\in\cF_1\}
\subseteq\spanR\{X,Z\}.
\label{eq:SM-local-S-new}
\end{equation}
Because the local free-state space is non-trivial, $S\neq\{0\}$, and hence
\begin{equation}
\dim S\in\{1,2\}.
\label{eq:SM-local-dim-new}
\end{equation}

If $\dim S=2$, define the involution
\begin{equation}
\vartheta(H):=H^*.
\label{eq:SM-theta-imag-new}
\end{equation}
If $\dim S=1$, a rotation about the $Y$ axis leaves $g$ unchanged and allows us to take $S=\spanR\{Z\}$; define
\begin{equation}
\vartheta(H):=ZHZ.
\label{eq:SM-theta-parity-new}
\end{equation}
For $m\ge1$, let
\begin{equation}
\mathsf E_m^\pm
:=
\{H\in\Herm((\mathbb C^2)^{\otimes m}):
\vartheta^{\otimes m}(H)=\pm H\}.
\label{eq:SM-even-odd-new}
\end{equation}
Thus \(\mathsf E_m^+\) and \(\mathsf E_m^-\) are respectively the invariant and anti-invariant subspaces of \(\vartheta^{\otimes m}\).
In the first case, $\mathsf E_m^+$ is spanned by Pauli strings containing an even number of $Y$ factors; in the second, it is spanned by strings containing an even number of factors from $\{X,Y\}$.

\subsection{The invariant operator space lies in the free-state span}

\begin{lemma}[Invariant-span lemma]
\label{lem:SM-even-span-new}
For every $m\ge1$,
\begin{equation}
\mathsf E_m^+
\subseteq
\spanR\cF_m.
\label{eq:SM-even-span-new}
\end{equation}
\end{lemma}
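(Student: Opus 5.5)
The plan is to build $\mathsf E_m^+$ out of tensor products of a few elementary free operators, using product closure (C1), permutation invariance (C3) and linearity of $\spanR$. Two facts drive it. First, $\spanR\cF_m$ is closed under the operation $(H_1,H_2)\mapsto H_1\otimes H_2$ with $H_1\in\spanR\cF_k$ and $H_2\in\spanR\cF_{m-k}$, since (C1) makes products of free states free. Second, by (C3) we may place such factors on arbitrary subsets of positions. So it suffices to show that every Pauli string spanning $\mathsf E_m^+$ factorizes into one- and two-qubit blocks, each lying in the appropriate free span.

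The local blocks come from Lemma~\ref{lem:SM-overlap-rigidity}. We have $I/2\in\cF_1$, so $I$ and every element of $S$ lie in $\spanR\cF_1$. This gives $\spanR\cF_1\supseteq\spanR\{I,X,Z\}$ when $\dim S=2$, and $\spanR\cF_1\supseteq\spanR\{I,Z\}$ when $\dim S=1$. The two-qubit block comes from $\Omega_{\max}$ in Lemma~\ref{lem:SM-replication-structure}, which is free. Once $s_g=\tfrac12$ it equals
\[
\omega_+=\tfrac12\bigl(g^{\otimes2}+(g^\perp)^{\otimes2}\bigr)=\tfrac14(I\otimes I+Y\otimes Y),
\]
so $Y\otimes Y\in\spanR\cF_2$.

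In the imaginarity case this already suffices. A string with an even number of $Y$'s is obtained by pairing the $Y$ positions into $Y\otimes Y$ blocks, filling the remaining positions with $I,X,Z$, and then applying product closure and a permutation.

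In the parity case we additionally need the blocks $X\otimes X$, $X\otimes Y$ and $Y\otimes X$ in $\spanR\cF_2$. Pairing the $\{X,Y\}$ positions then works exactly as above. To obtain these blocks we will first enlarge the free correlated states. Applying $\mathcal R$ to one half of $\omega_+$ gives the free state
\[
\Gamma=\tfrac12\bigl(g^{\otimes3}+(g^\perp)^{\otimes3}\bigr),
\]
exactly as in the proof of Lemma~\ref{lem:SM-overlap-rigidity}. One leg of $\Gamma$ can then serve as a program register for the one-copy simulator $\Phi_V$ of a single-qubit unitary $V$ (Lemma~\ref{lem:SM-two-local-to-full}), acting on a free target or on another leg. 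By Lemma~\ref{lem:SM-complete-preservation} the output is a free-cone element of the form
\[
\tfrac12\bigl(\mathcal V(\rho)\otimes g+M(\rho)\otimes g^\perp\bigr)\qquad\text{(on target and reference)},
\]
where $\mathcal V(\rho)=V\rho V^\dagger$ and $M(\rho):=\Phi_V(\rho\otimes g^\perp)$ is an unknown channel. The plan is to choose $V=e^{-i\pi Z/4}$ (or $e^{-i\pi X/4}$), which maps $Y\mapsto\pm X$, and to apply it to the first leg of $\omega_+$ while a further leg carries the program. The $g$-branch then contributes $X\otimes Y$-type correlations between target and reference. Taking suitable sums and differences over choices of $V$ and over the known free elements already in the span should isolate $X\otimes Y$ and $X\otimes X$. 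Alternatively, the pure-state conversions $\Lambda_h$ of Lemma~\ref{lem:SM-replication-structure} can be used in place of $\Phi_V$ to turn a leg of $\Gamma$ into $h$ on the $g$-branch.

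The main obstacle is the garbage term $M(\rho)\otimes g^\perp$ from the $g^\perp$ program branch, which is not controlled by exact simulation. To handle it, I would first try the technique of Lemma~\ref{lem:SM-overlap-rigidity}: combine the fixed overlap $\Tr(g\,\cdot)=\tfrac12$ on every free-cone element with trace preservation to pin down $M$ on the relevant inputs, or show that its contribution to the $\{X,Y\}\otimes Y$ coefficients vanishes. Failing that, I would use the $\vartheta$-symmetric choice of program (running the same simulator with $g$ and $g^\perp$ interchanged, i.e.\ on $Z\Gamma Z$-type legs) so that the unknown branches cancel in the difference and leave the desired $\mathsf E_2^+$ components.
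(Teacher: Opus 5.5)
Your imaginarity case, including the $Y\otimes Y$ block from $\Omega_{\max}=\omega_+$ at $s_g=\tfrac12$, matches the paper's argument. The parity case ($\dim S=1$), however, is not proved. You never actually place $X\otimes X$, $X\otimes Y$, $Y\otimes X$ in $\spanR\cF_2$, and the uncontrolled $g^\perp$-program branch you identify is left open.

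Neither remedy you suggest closes it as stated.
\begin{itemize}
\item The fixed overlap $\Tr(g\,\cdot)=\tfrac12$ only kills the $Y$ component of the target marginal. Take your construction with $VgV^\dagger=\tfrac12(I-X)$ and let $b$ be the Bloch vector of $M(g^\perp)$. You then obtain only $(1+b_x)\,X\otimes Y+b_z\,Z\otimes Y$, modulo terms already in the span, and $Z\otimes Y\notin\mathsf E_2^+$.
\item The $\vartheta$-symmetric cancellation presupposes that states such as $\tfrac12(g\otimes g\otimes g^\perp+g^\perp\otimes g^\perp\otimes g)$ are free, which has not been established at this stage. Even granting that, the difference of the two outputs still contains the unknown states $M(g)$ and $M(g^\perp)$, so nothing cancels.
\end{itemize}

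The missing idea is to use the full one-dimensional local geometry, $\cF_1\subseteq\{\tfrac12(I+tZ)\}$, together with marginal consistency and the Bloch-ball constraint. This pins the garbage branch down completely, and then neither $\Gamma$ nor unitary simulators are needed.

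The paper takes the CFK conversion $\Lambda_A$ of Lemma~\ref{lem:SM-replication-structure} with $\Lambda_A(g)=\tfrac12(I+A)$, $A\in\{X,Y\}$.
\begin{itemize}
\item Since $I/2$ is free, $\Lambda_A(I/2)=\tfrac12(I+t_AZ)$.
\item Linearity then gives $\Lambda_A(g^\perp)=\tfrac12(I-A+2t_AZ)$. Its Bloch vector has squared length $1+4t_A^2$, so positivity forces $t_A=0$, i.e.\ $\Lambda_A(g^\perp)=\tfrac12(I-A)$.
\item Hence $(\Lambda_A\otimes\Lambda_B)(\omega_+)=\tfrac14(I\otimes I+A\otimes B)\in\cF_2$ for all $A,B\in\{X,Y\}$, and your pairing argument finishes.
\end{itemize}

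Your own $\Gamma$/$\Phi_V$ route would also close by this observation. The target marginal $\tfrac12\bigl(VgV^\dagger+M(g^\perp)\bigr)$ is free, so its Bloch vector lies on the $Z$ axis. This forces $b_x=1$ and $b_y=0$, and then $b_z=0$ by $|b|\le1$. So $M(g^\perp)=\tfrac12(I+X)$, and the output is $\tfrac14(I\otimes I-X\otimes Y)$.
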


\begin{proof}
First suppose $\dim S=2$.
Then
\[
\spanR\cF_1=\spanR\{I,X,Z\}.
\]
By Lemma~\ref{lem:SM-replication-structure} and $I/2\in\cF_1$,
\begin{equation}
\omega_+
:=
\mathcal R(I/2)
=
\frac12(g\otimes g+g^\perp\otimes g^\perp)
=
\frac14(I\otimes I+Y\otimes Y)
\in\cF_2.
\label{eq:SM-omega-plus-new}
\end{equation}
Since $I\otimes I\in\spanR\cF_2$, this gives
$Y\otimes Y\in\spanR\cF_2$.
Product closure implies that tensor products of operators from free-state spans again belong to the corresponding free-state span.
Every Pauli string with an even number of $Y$ factors can therefore be built from single-site factors $I,X,Z$ and paired $Y$ factors.
Hence Eq.~\eqref{eq:SM-even-span-new} holds.

Now suppose $\dim S=1$, with $S=\spanR\{Z\}$.
Then
\[
\spanR\cF_1=\spanR\{I,Z\}.
\]
For $A\in\{X,Y\}$, put
\[
\psi_A:=\frac12(I+A).
\]
By Lemma~\ref{lem:SM-replication-structure}, choose a CFK channel $\Lambda_A$ satisfying
$\Lambda_A(g)=\psi_A$.
Since $\Lambda_A(I/2)$ is free, there is $t_A\in[-1,1]$ such that
\[
\Lambda_A(I/2)=\frac12(I+t_AZ).
\]
Linearity gives
\[
\Lambda_A(g^\perp)
=2\Lambda_A(I/2)-\Lambda_A(g)
=
\frac12(I-A+2t_AZ).
\]
This is a density operator, so its Bloch vector must have norm at most one.
Because $A\perp Z$ and $\|A\|=1$, this implies
$1+4t_A^2\le1$, hence
\begin{equation}
\Lambda_A(I/2)=I/2,
\qquad
\Lambda_A(g^\perp)=\frac12(I-A).
\label{eq:SM-parity-conversion-new}
\end{equation}
For $A,B\in\{X,Y\}$, tensor closure of CFK operations and Eq.~\eqref{eq:SM-omega-plus-new} give
\begin{equation}
(\Lambda_A\otimes\Lambda_B)(\omega_+)
=
\frac14(I\otimes I+A\otimes B)
\in\cF_2.
\label{eq:SM-parity-pairs-new}
\end{equation}
Thus every pair of local anti-invariant factors $A,B\in\{X,Y\}$ lies in $\spanR\cF_2$.
Together with the local invariant factors $I,Z$, product closure generates every Pauli string with an even number of $X/Y$ factors.
This proves Eq.~\eqref{eq:SM-even-span-new} also in the one-dimensional case.
\end{proof}

\subsection{Universal simulation eliminates the anti-invariant sector}

\begin{lemma}[Anti-invariant-component elimination]
\label{lem:SM-odd-elimination-new}
For every $m\ge1$,
\begin{equation}
\cF_m
\subseteq
\{\rho:\vartheta^{\otimes m}(\rho)=\rho\}.
\label{eq:SM-upper-new}
\end{equation}
\end{lemma}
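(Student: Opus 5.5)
The plan is a contradiction argument that combines a single well-chosen simulated channel with the linearity of its CFK simulator and the invariant-span lemma. Suppose some $\rho\in\cF_m$ has a nonzero anti-invariant component. Decompose $\rho=\rho_++H$ with $\rho_+\in\mathsf E_m^+$ and $0\neq H\in\mathsf E_m^-$.

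The decomposition is orthogonal for the Hilbert--Schmidt inner product. Indeed, in both cases $\vartheta^{\otimes m}$ is a trace-preserving, real-linear involution on Hermitian operators that preserves $\Tr(AB)$: this holds for complex conjugation because $\Tr(A^*B^*)=\overline{\Tr(AB)}=\Tr(AB)$ for Hermitian $A,B$, and for $Z^{\otimes m}$-conjugation trivially. Hence $\Tr(\rho_+H)=0$. Since $I\in\mathsf E_m^+$, we also get $\Tr H=0$.

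\textbf{Choice of test instrument.} With $Y$ the resource axis of Eq.~\eqref{eq:SM-gY-new}, take the single-qubit-output measure-and-prepare channel
\[
\mathcal N(X)=\Tr(X)\,\frac I2+c\,\Tr(HX)\,\frac Y2 ,
\]
with $c>0$ small enough, for instance $c\le 1/\|H\|_\infty$, that $\mathcal N$ is completely positive and trace preserving. By Lemma~\ref{lem:SM-two-local-to-full}, in its effective one-copy form, there is a CFK channel $\Psi:A\otimes P\to B$, with $A$ consisting of $m$ qubits and $B$ of one qubit, such that $\Psi(X\otimes g)=\mathcal N(X)$ for all $X$. We extend this identity to all operators by linearity.

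\textbf{Key observation.} Lemma~\ref{lem:SM-overlap-rigidity} shows that every free qubit state has vanishing $Y$ component. The same therefore holds for every element of $\spanR\cF_1$, so $\Tr(Y\,\Psi(W))=0$ whenever $W\in\spanR\cF_{m+1}$, because $\Psi$ maps free states to free states. We apply this to two inputs.
\begin{itemize}
\item The operator $H\otimes Y$ is invariant under $\vartheta^{\otimes(m+1)}$: it is a product of two anti-invariant factors, since $Y^*=-Y$ and $ZYZ=-Y$. Lemma~\ref{lem:SM-even-span-new} then gives $H\otimes Y\in\spanR\cF_{m+1}$.
\item The state $\rho\otimes I/2$ is free by product closure.
\end{itemize}
For the second input, observe also that $\rho_+\otimes I\in\mathsf E_{m+1}^+\subseteq\spanR\cF_{m+1}$. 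Subtracting, we obtain
\[
\Tr\!\big(Y\Psi(H\otimes I)\big)=0 .
\]
Writing $g=\tfrac12(I+Y)$ and using linearity,
\[
\Tr\!\big(Y\mathcal N(H)\big)=\tfrac12\Tr\!\big(Y\Psi(H\otimes I)\big)+\tfrac12\Tr\!\big(Y\Psi(H\otimes Y)\big)=0 .
\]
On the other hand, $\Tr H=0$ gives directly
\[
\Tr\!\big(Y\mathcal N(H)\big)=c\,\Tr(H^2)>0 ,
\]
which is a contradiction. Therefore $H=0$, and Eq.~\eqref{eq:SM-upper-new} holds.

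\textbf{Main obstacle.} The delicate point is justifying that $\Psi$ may be applied to the non-positive operators $H\otimes I$ and $H\otimes Y$ and still have its output controlled by the free sector. I expect this to go through because the argument uses only linearity of $\Psi$ together with the spanning statement of Lemma~\ref{lem:SM-even-span-new}; no positivity of the inputs is needed. What must be checked carefully is that the invariant/anti-invariant bookkeeping is right in both cases:
\begin{itemize}
\item $Y$ is anti-invariant for both choices of $\vartheta$;
\item $I$ is invariant for both choices;
\item the output qubit's free span really excludes $Y$.
\end{itemize}
The last item holds because the simulator's output free-state set is that of a single qubit.
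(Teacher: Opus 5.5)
Your proof is correct, and it takes a shorter route than the paper's.

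The paper fixes an anti-invariant Pauli string $C$ and simulates the state-independent channel $\mathcal N_C(\rho)=\Tr(P_+\rho)g+\Tr(P_-\rho)g^\perp$. It then analyzes the Heisenberg-picture observable $H_C=\Phi_C^\dagger(Y)$ in four steps:
\begin{enumerate}
\item A norm-saturation argument gives the block form $H_C=C\otimes g+K_C\otimes g^\perp$.
\item The free inputs $\Omega\otimes I/2$ give $C+K_C\perp\operatorname{span}_{\mathbb R}\cF_m$, hence $K_C\in\mathsf E_m^-$.
\item The auxiliary span elements $\chi_A\otimes E$ force $K_C=C$.
\item Only then does it conclude $\Tr(C\Omega)=0$.
\end{enumerate}

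Your key observation bypasses the whole determination of the simulator's structure. For a free $\rho=\rho_++H$, the operator $H\otimes g=\tfrac12(H\otimes I+H\otimes Y)$ already lies in $\operatorname{span}_{\mathbb R}\cF_{m+1}$:
\begin{enumerate}
\item the first piece because it equals $2(\rho\otimes I/2)-\rho_+\otimes I$;
\item the second piece because it is $\vartheta^{\otimes(m+1)}$-invariant.
\end{enumerate}
The contradiction then follows directly from linearity and the simulation identity $\Psi(H\otimes g)=\mathcal N(H)$.

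Both proofs draw on exactly the same ingredients: the invariant-span lemma, overlap rigidity, $I/2\in\cF_1$, product closure, and one-copy simulation. In fact your trick works verbatim with the paper's fixed channel, since $\Tr(C\rho)=\Tr(CH)=\Tr[Y\mathcal N_C(H)]=\Tr[Y\Phi_C(H\otimes g)]=0$.

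What the paper's longer route buys is structural information about every such simulator, namely $\Phi_C^\dagger(Y)=C\otimes I$. It also uses a finite, state-independent family of test channels. What yours buys is brevity and the elimination of the block-decomposition and $K_C$ bookkeeping. The state dependence of your channel is harmless, because Capability~\ref{cap:universal-simulation} covers every instrument.

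A minor point: you do not need $\Tr H=0$. Since $\Tr Y=0$, the first term of $\Tr(Y\mathcal N(H))$ vanishes anyway.
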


\begin{proof}
Let $C$ be any Pauli string in $\mathsf E_m^-$; then $C=C^\dagger$ and $C^2=I$.
Define
\[
P_\pm:=\frac12(I\pm C)
\]
and the channel
\begin{equation}
\mathcal N_C(\rho)
:=
\Tr(P_+\rho)g+\Tr(P_-\rho)g^\perp.
\label{eq:SM-NC-new}
\end{equation}
Using $Y=g-g^\perp$ gives
\begin{equation}
\mathcal N_C^\dagger(Y)=C.
\label{eq:SM-NC-adjoint-new}
\end{equation}
By Capability~\ref{cap:universal-simulation} and the effective one-copy form established in Lemma~\ref{lem:SM-two-local-to-full}, there exists a CFK channel
$\Phi_C:m+1\to1$ satisfying
\[
\Phi_C(\rho\otimes g)=\mathcal N_C(\rho).
\]
Set
\[
H_C:=\Phi_C^\dagger(Y).
\]
Since $\Phi_C^\dagger$ is unital and completely positive,
$-I\le H_C\le I$.
Let $J_g\ket\psi=\ket\psi\otimes\ket\gamma$.
Equation~\eqref{eq:SM-NC-adjoint-new} gives
\begin{equation}
J_g^\dagger H_CJ_g=C.
\label{eq:SM-compression-new}
\end{equation}
Because $C^2=I$, the compression in Eq.~\eqref{eq:SM-compression-new} saturates the norm bound on each $\pm1$ eigenspace of $C$.
For a $+1$ eigenvector $v$, positivity of $I-H_C$ and
$\langle v,g|(I-H_C)|v,g\rangle=0$ imply
$H_C\ket{v,g}=\ket{v,g}$; similarly, a $-1$ eigenvector satisfies
$H_C\ket{v,g}=-\ket{v,g}$.
Thus the program-$g$ subspace reduces $H_C$, and
\begin{equation}
H_C=C\otimes g+K_C\otimes g^\perp
\label{eq:SM-H-block-new}
\end{equation}
for some Hermitian contraction $K_C$ on the $m$ data qubits.

Let $\Omega\in\cF_m$.
Since $I/2$ is free, $\Omega\otimes I/2$ is free.
By CFK freeness, $\Phi_C(\Omega\otimes I/2)$ is a free qubit state, whose $Y$ expectation vanishes by Lemma~\ref{lem:SM-overlap-rigidity}.
Hence
\begin{equation}
0
=
\Tr[H_C(\Omega\otimes I/2)]
=
\frac12\Tr[(C+K_C)\Omega].
\label{eq:SM-CplusK-new}
\end{equation}
Therefore $C+K_C$ is orthogonal to $\spanR\cF_m$.
By Lemma~\ref{lem:SM-even-span-new}, it is orthogonal to $\mathsf E_m^+$.
Since $C\in\mathsf E_m^-$, the invariant component of $K_C$ must vanish; hence
\begin{equation}
K_C\in\mathsf E_m^-.
\label{eq:SM-K-odd-new}
\end{equation}

We now show $K_C=C$.
Let $\mathcal O_1$ denote a basis of the local anti-invariant space:
\[
\mathcal O_1=\{Y\}
\quad\text{in the imaginarity case},
\qquad
\mathcal O_1=\{X,Y\}
\quad\text{in the parity case}.
\]
For $A\in\mathcal O_1$, define
\begin{equation}
\chi_A:=\frac14(I\otimes I+A\otimes Y).
\label{eq:SM-chi-new}
\end{equation}
In the imaginarity case, $\chi_Y=\omega_+$.
In the parity case, Eq.~\eqref{eq:SM-parity-pairs-new} with $B=Y$ shows $\chi_A\in\cF_2$.

Fix a data site $j$ and $E\in\mathsf E_{m-1}^+$.
By Lemma~\ref{lem:SM-even-span-new}, product closure, and permutation consistency,
$\chi_A^{jP}\otimes E$ belongs to $\spanR\cF_{m+1}$.
The linear functional
$X\mapsto\Tr[H_CX]$ vanishes on $\spanR\cF_{m+1}$ because a CFK output has zero $Y$ expectation.
Using
\[
H_C
=
\frac12[(C+K_C)\otimes I+(C-K_C)\otimes Y]
\]
and Eq.~\eqref{eq:SM-chi-new}, we obtain
\[
0
\propto
\Tr[(C+K_C)(I_j\otimes E)]
+
\Tr[(C-K_C)(A_j\otimes E)].
\]
The first term vanishes because $C+K_C\in\mathsf E_m^-$ and $I_j\otimes E\in\mathsf E_m^+$.
Thus
\begin{equation}
\Tr[(C-K_C)(A_j\otimes E)]=0.
\label{eq:SM-CK-orth-new}
\end{equation}
As $j$, $A\in\mathcal O_1$, and $E\in\mathsf E_{m-1}^+$ vary, the operators $A_j\otimes E$ span $\mathsf E_m^-$: every Pauli string in the anti-invariant subspace contains at least one local anti-invariant factor, and removing that factor leaves an invariant string.
By Eq.~\eqref{eq:SM-K-odd-new}, $C-K_C\in\mathsf E_m^-$, so Eq.~\eqref{eq:SM-CK-orth-new} implies
\begin{equation}
K_C=C.
\label{eq:SM-KequalsC-new}
\end{equation}
Returning to Eq.~\eqref{eq:SM-CplusK-new} gives
\[
\Tr(C\Omega)=0
\qquad
\forall\,\Omega\in\cF_m.
\]
The Pauli strings in $\mathsf E_m^-$ form a basis of the anti-invariant subspace, so every free state has vanishing anti-invariant component.
This proves Eq.~\eqref{eq:SM-upper-new}.
\end{proof}

\subsection{Strong state universality gives the lower inclusion}

\begin{lemma}[Lower inclusion]
\label{lem:SM-lower-new}
For every $m\ge1$,
\begin{equation}
\{\rho:\vartheta^{\otimes m}(\rho)=\rho\}
\subseteq
\cF_m.
\label{eq:SM-lower-new}
\end{equation}
\end{lemma}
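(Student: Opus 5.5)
The plan is to show that every invariant pure state is free and then pass to mixtures by convexity. Write $\Fix_m:=\{\rho:\vartheta^{\otimes m}(\rho)=\rho\}$. In the imaginarity case every real density matrix is a convex combination of real pure states: diagonalize with a real orthogonal eigenbasis. In the parity case every $\rho$ commuting with $Z^{\otimes m}$ decomposes into eigenvectors of definite parity. So in both cases $\Fix_m\cap\cD$ is the convex hull of its pure elements $h=\ket{\psi}\bra{\psi}$. Since $\cF_m$ is convex, it suffices to prove $h\in\cF_m$ for every such invariant pure $h$.

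Fix such an $h$. By Lemma~\ref{lem:SM-replication-structure}(1) there is a CFK channel $\Lambda_h$ with $\Lambda_h(g)=h$. I would feed it the free state $I/2=\frac12(g+g^\perp)$, which is free by Lemma~\ref{lem:SM-overlap-rigidity}. This gives
\[
\Lambda_h(I/2)=\tfrac12 h+\tfrac12\Lambda_h(g^\perp)\in\cF_m .
\]
The goal is then to identify $\Lambda_h(g^\perp)$. Write $\Lambda_h(g^\perp)=\vartheta^{\otimes m}(h)+\Delta$. The upper inclusion (Lemma~\ref{lem:SM-odd-elimination-new}) forces $\Lambda_h(I/2)$ to be $\vartheta^{\otimes m}$-invariant. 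Since $h$ is invariant, $\Lambda_h(g^\perp)$ is also invariant, but this alone does not pin it down.

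The step I expect to be the main obstacle is forcing $\Lambda_h(g^\perp)=h$, or at least removing its contribution. My first attempt uses the correlated free state $\omega_+=\frac12(g\otimes g+g^\perp\otimes g^\perp)$ from Eq.~\eqref{eq:SM-omega-plus-new}. Applying $\Lambda_h$ to its first factor and keeping the second as reference gives the free state
\[
\tfrac12\bigl(h\otimes g+\Lambda_h(g^\perp)\otimes g^\perp\bigr)\in\cF_{m+1} .
\]
By the upper inclusion this state is $\vartheta^{\otimes(m+1)}$-invariant. Because $\vartheta(g)=g^\perp$, which holds in both cases ($g^*=g^\perp$ and $ZgZ=g^\perp$), invariance forces
\[
\vartheta^{\otimes m}(\Lambda_h(g^\perp))=h, \qquad\text{that is,}\qquad \Lambda_h(g^\perp)=\vartheta^{\otimes m}(h)=h .
\]
Hence $\Lambda_h(I/2)=h$ is free, which is what we need.

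Finally, convexity and closedness of $\cF_m$ give $\Fix_m\cap\cD\subseteq\cF_m$, and combining with Lemma~\ref{lem:SM-odd-elimination-new} yields equality. This argument uses only the replication-induced pure conversion, the free state $\omega_+$, complete free-cone preservation (Lemma~\ref{lem:SM-complete-preservation}), and the upper inclusion. If the invariance argument on $\omega_+$ needs care because the reference is a single system, permutation consistency together with the tensor structure of $\vartheta^{\otimes(m+1)}$ should suffice.
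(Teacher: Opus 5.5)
Your proof is correct and follows essentially the same route as the paper. Both apply $\Lambda_h$ to one half of $\omega_+$, then use the upper inclusion together with $\vartheta(g)=g^\perp$ to force $\Lambda_h(g^\perp)=\vartheta^{\otimes m}(h)$, and finish by feeding in $I/2$ and using convexity.

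The only difference is cosmetic. The paper keeps $h$ arbitrary and writes $\rho=\frac12[\rho+\vartheta^{\otimes m}(\rho)]=\sum_i p_i\,\frac12[h_i+\vartheta^{\otimes m}(h_i)]$ for any spectral decomposition of $\rho$. This avoids your case-by-case choice of a $\vartheta$-invariant eigenbasis.
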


\begin{proof}
Let $h$ be an arbitrary pure $m$-qubit state.
By Lemma~\ref{lem:SM-replication-structure}, choose a CFK channel
$\Lambda_h:1\to m$ such that
$\Lambda_h(g)=h$.
Set
\[
\eta_h:=\Lambda_h(g^\perp).
\]
Applying $\Lambda_h$ to one subsystem of the free state $\omega_+$ in Eq.~\eqref{eq:SM-omega-plus-new}, while leaving the other subsystem untouched, gives by Lemma~\ref{lem:SM-complete-preservation}
\begin{equation}
\Omega_h
:=
\frac12(h\otimes g+\eta_h\otimes g^\perp)
\in\cF_{m+1}.
\label{eq:SM-Omega-h-new}
\end{equation}
By Lemma~\ref{lem:SM-odd-elimination-new}, $\Omega_h$ is fixed by
$\vartheta^{\otimes m}\otimes\vartheta$.
In both cases defined above,
\[
\vartheta(g)=g^\perp,
\qquad
\vartheta(g^\perp)=g.
\]
Comparing the two orthogonal program blocks in Eq.~\eqref{eq:SM-Omega-h-new} therefore gives
\begin{equation}
\eta_h=\vartheta^{\otimes m}(h).
\label{eq:SM-partner-new}
\end{equation}
Since $I/2=(g+g^\perp)/2$ is free,
\begin{equation}
\Lambda_h(I/2)
=
\frac12[h+\vartheta^{\otimes m}(h)]
\in\cF_m.
\label{eq:SM-symmetrized-pure-new}
\end{equation}

Now let $\rho$ be any state fixed by $\vartheta^{\otimes m}$ and write a spectral decomposition
$\rho=\sum_i p_ih_i$ into pure states.
Then
\[
\rho
=
\frac12[\rho+\vartheta^{\otimes m}(\rho)]
=
\sum_i p_i\frac12[h_i+\vartheta^{\otimes m}(h_i)].
\]
Every term is free by Eq.~\eqref{eq:SM-symmetrized-pure-new}, and convexity gives $\rho\in\cF_m$.
This proves Eq.~\eqref{eq:SM-lower-new}.
\end{proof}

Combining Lemmas~\ref{lem:SM-odd-elimination-new} and~\ref{lem:SM-lower-new}, we obtain for every finite $m$:
\begin{equation}
\cF_m
=
\{\rho:\vartheta^{\otimes m}(\rho)=\rho\}.
\label{eq:SM-final-classification-new}
\end{equation}
If $\dim S=2$, this is
\[
\cF_m=\{\rho:\rho^*=\rho\},
\]
namely imaginarity.
If $\dim S=1$, this is
\[
\cF_m=\{\rho:[\rho,Z^{\otimes m}]=0\},
\]
namely parity asymmetry.
This proves Theorem~\ref{thm:classification} of the main text.

As an immediate consequence, the Bell state
\begin{equation}
\ket{\Phi^+}
=
\frac{\ket{00}+\ket{11}}{\sqrt2}
\label{eq:SM-Bell-corollary-new}
\end{equation}
is free in both cases: its projector is real and has even parity.
Thus the second statement of Proposition~\ref{prop:composite-structure} in the main text also follows.






\begin{thebibliography}{99}

\bibitem{ChitambarGour2019}
E.~Chitambar and G.~Gour,
Quantum resource theories,
Rev. Mod. Phys. \textbf{91}, 025001 (2019).

\bibitem{BrandaoGour2015}
F.~G.~S.~L.~Brand\~{a}o and G.~Gour,
Reversible framework for quantum resource theories,
Phys. Rev. Lett. \textbf{115}, 070503 (2015).

\bibitem{Gour2017}
G.~Gour,
Quantum resource theories in the single-shot regime,
Phys. Rev. A \textbf{95}, 062314 (2017).

\bibitem{Regula2018}
B.~Regula,
Convex geometry of quantum resource quantification,
J. Phys. A \textbf{51}, 045303 (2018).

\bibitem{HorodeckiEtAl2009}
R.~Horodecki,
P.~Horodecki,
M.~Horodecki, and K.~Horodecki,
Quantum entanglement,
Rev. Mod. Phys. \textbf{81}, 865 (2009).

\bibitem{BaumgratzEtAl2014}
T.~Baumgratz,
M.~Cramer, and M.~B.~Plenio,
Quantifying coherence,
Phys. Rev. Lett. \textbf{113}, 140401 (2014).

\bibitem{GourSpekkens2008}
G.~Gour and R.~W.~Spekkens,
The resource theory of quantum reference frames: Manipulations and monotones,
New J. Phys. \textbf{10}, 033023 (2008).

\bibitem{BrandaoEtAl2013}
F.~G.~S.~L.~Brand\~{a}o,
M.~Horodecki,
J.~Oppenheim,
J.~M.~Renes, and R.~W.~Spekkens,
Resource theory of quantum states out of thermal equilibrium,
Phys. Rev. Lett. \textbf{111}, 250404 (2013).

\bibitem{VeitchEtAl2014}
V.~Veitch,
S.~A.~Hamed Mousavian,
D.~Gottesman, and J.~Emerson,
The resource theory of stabilizer quantum computation,
New J. Phys. \textbf{16}, 013009 (2014).

\bibitem{ScandiSurace2021}
M.~Scandi and J.~Surace,
Undecidability in resource theory: Can you tell resource theories apart?,
Phys. Rev. Lett. \textbf{127}, 270501 (2021).

\bibitem{NagasawaEtAl2025}
T.~Nagasawa,
E.~Wakakuwa,
K.~Kato, and F.~Buscemi,
Macroscopicity and observational deficit in states, operations, and correlations,
Rep. Prog. Phys. \textbf{88}, 117601 (2025).

\bibitem{LieEtAl2026}
S.~H.~Lie,
J.~Son,
P.~Boes,
N.~H.~Y.~Ng, and H.~Wilming,
Thermal operations from informational equilibrium,
Phys. Rev. Lett. \textbf{137}, 030403 (2026).

\bibitem{SonEtAl2026}
J.~Son,
R.~Ganardi,
S.~Minagawa,
F.~Buscemi,
S.~H.~Lie, and N.~H.~Y.~Ng,
Catalytic channels are the only noise-robust catalytic processes,
Phys. Rev. Lett. \textbf{136}, 050202 (2026).

\bibitem{KuroiwaYamasaki2020}
K.~Kuroiwa and H.~Yamasaki,
General quantum resource theories: Distillation, formation and consistent resource measures,
Quantum \textbf{4}, 355 (2020).

\bibitem{BartlettRudolphSpekkens2007}
S.~D.~Bartlett,
T.~Rudolph, and R.~W.~Spekkens,
Reference frames, superselection rules, and quantum information,
Rev. Mod. Phys. \textbf{79}, 555 (2007).

\bibitem{MiyazakiAkibue2024}
J.~Miyazaki and S.~Akibue,
Non-locality of conjugation symmetry: Characterization and examples in quantum network sensing,
New J. Phys. \textbf{26}, 053017 (2024).

\bibitem{WuEtAl2024Distributed}
K.-D.~Wu,
T.~V.~Kondra,
C.~M.~Scandolo,
S.~Rana,
G.-Y.~Xiang,
C.-F.~Li,
G.-C.~Guo, and A.~Streltsov,
Resource theory of imaginarity in distributed scenarios,
Commun. Phys. \textbf{7}, 171 (2024).

\bibitem{GanardiEtAl2026}
R.~Ganardi,
J.~Son,
J.~Czartowski,
S.~H.~Lie, and N.~H.~Y.~Ng,
Manipulating heterogeneous quantum resources over a network,
arXiv:2602.17803.

\bibitem{BravyiKitaev2005}
S.~Bravyi and A.~Kitaev,
Universal quantum computation with ideal Clifford gates and noisy ancillas,
Phys. Rev. A \textbf{71}, 022316 (2005).

\bibitem{MarvianSpekkens2012}
I.~Marvian and R.~W.~Spekkens,
An information-theoretic account of the Wigner-Araki-Yanase theorem,
arXiv:1212.3378.

\bibitem{HickeyGour2018}
A.~Hickey and G.~Gour,
Quantifying the imaginarity of quantum mechanics,
J. Phys. A \textbf{51}, 414009 (2018).

\bibitem{WuEtAl2021}
K.-D.~Wu,
T.~V.~Kondra,
S.~Rana,
C.~M.~Scandolo,
G.-Y.~Xiang,
C.-F.~Li,
G.-C.~Guo, and A.~Streltsov,
Operational resource theory of imaginarity,
Phys. Rev. Lett. \textbf{126}, 090401 (2021).

\bibitem{WuEtAl2021PRA}
K.-D.~Wu,
T.~V.~Kondra,
S.~Rana,
C.~M.~Scandolo,
G.-Y.~Xiang,
C.-F.~Li,
G.-C.~Guo, and A.~Streltsov,
Resource theory of imaginarity: Quantification and state conversion,
Phys. Rev. A \textbf{103}, 032401 (2021).

\bibitem{Takeuchi2024}
Y.~Takeuchi,
Catalytic transformation from computationally universal to strictly universal measurement-based quantum computation,
Phys. Rev. Lett. \textbf{133}, 050601 (2024).

\bibitem{NakayamaTakeuchiAkibue2026}
Y.~Nakayama, Y.~Takeuchi, and S.~Akibue,
Uniqueness of imaginarity-assisted exact transformation from real orthogonal operations to arbitrary unitary operations,
Sci. Rep. (2026), doi:10.1038/s41598-026-70782-1.

\bibitem{Wigner1931}
E.~P.~Wigner,
Gruppentheorie und ihre Anwendung auf die Quantenmechanik der Atomspektren
(Vieweg, Braunschweig, 1931).

\bibitem{Bargmann1964}
V.~Bargmann,
Note on Wigner's theorem on symmetry operations,
J. Math. Phys. \textbf{5}, 862 (1964).

\bibitem{ChitambarGour2016}
E.~Chitambar and G.~Gour,
Critical examination of incoherent operations and a physically consistent resource theory of quantum coherence,
Phys. Rev. Lett. \textbf{117}, 030401 (2016).

\bibitem{LiuHuLloyd2017}
Z.-W.~Liu,
X.~Hu, and S.~Lloyd,
Resource destroying maps,
Phys. Rev. Lett. \textbf{118}, 060502 (2017).

\bibitem{AlbarelliEtAl2018}
F.~Albarelli,
M.~G.~Genoni,
M.~G.~A.~Paris, and A.~Ferraro,
Resource theory of quantum non-Gaussianity and Wigner negativity,
Phys. Rev. A \textbf{98}, 052350 (2018).

\bibitem{SaxenaEtAl2020}
G.~Saxena,
E.~Chitambar, and G.~Gour,
Dynamical resource theory of quantum coherence,
Phys. Rev. Research \textbf{2}, 023298 (2020).

\bibitem{TajimaTakagi2025}
H.~Tajima and R.~Takagi,
Gibbs-preserving operations requiring infinite amount of quantum coherence,
Phys. Rev. Lett. \textbf{134}, 170201 (2025).

\bibitem{NielsenChuang1997}
M.~A.~Nielsen and I.~L.~Chuang,
Programmable quantum gate arrays,
Phys. Rev. Lett. \textbf{79}, 321 (1997).

\bibitem{HilleryZimanBuzek2006}
M.~Hillery,
M.~Ziman, and V.~Bu\v{z}ek,
Approximate programmable quantum processors,
Phys. Rev. A \textbf{73}, 022345 (2006).

\bibitem{YangRennerChiribella2020}
Y.~Yang,
R.~Renner, and G.~Chiribella,
Optimal universal programming of unitary gates,
Phys. Rev. Lett. \textbf{125}, 210501 (2020).

\bibitem{HokkyoTajima2026}
A.~Hokkyo and H.~Tajima,
Quantitative Wigner-Araki-Yanase theorems for unitary and antiunitary symmetries,
arXiv:2607.09075.

\bibitem{Belenchia2013}
A.~Belenchia, G.~M.~D'Ariano, and P.~Perinotti,
Universality of computation in real quantum theory,
EPL \textbf{104}, 20006 (2013).

\bibitem{Barenco1995}
A.~Barenco, C.~H.~Bennett, R.~Cleve, D.~P.~DiVincenzo, N.~Margolus, P.~Shor,
T.~Sleator, J.~A.~Smolin, and H.~Weinfurter,
Elementary gates for quantum computation,
Phys. Rev. A \textbf{52}, 3457 (1995).

\end{thebibliography}
\end{document}